\newcommand{\mystretch}{1.75}
\renewcommand{\baselinestretch}{\mystretch}
\begin{document}

\title{Optimal Energy Allocation \\ for Wireless Communications \\ with Energy Harvesting Constraints}%

%\author{\authorblockN{Chin Keong Ho$^*$}
%\authorblockA{$^*$Institute for Infocomm Research, A*STAR,% 1 Fusionopolis Way, %\#21-01 Connexis,
%%Singapore 138632.
%\\e-mail: \{hock,rzhang\}@i2r.a-star.edu.sg}
%\and
%\authorblockN{Rui Zhang$^*$$^\dag$}
%\authorblockA{$^\dag$Department of ECE, National University of Singapore\\e-mail: elezhang@nus.edu.sg}}

\author{Chin Keong Ho,~\IEEEmembership{IEEE Member}, and~Rui Zhang,~\IEEEmembership{IEEE Member}%
\thanks{This paper was presented in part at the IEEE International Symposium on Information Theory, Austin, TX, June 2010.}
\thanks{C. K. Ho is with the Institute for Infocomm Research, A*STAR, 1 Fusionopolis Way, \#21-01 Connexis, Singapore 138632 (e-mail: hock@i2r.a-star.edu.sg).}
\thanks{R. Zhang is with the Department of Electrical and Computer Engineering, National University of Singapore (e-mail:elezhang@nus.edu.sg). He is also
with the Institute for Infocomm Research, A*STAR, Singapore. This work has been supported in part by the National University of Singapore under Research Grant R-263-000-679-133.}
}

\newtheorem{conjecture}{Conjecture}
\newtheorem{remark}{Remark}
\newtheorem{insight}{Insight}
\newtheorem{question}{Question}
\newtheorem{proposition}{Proposition}
\newtheorem{corollary}{Corollary}
\newtheorem{lemma}{Lemma}
\newtheorem{assumption}{Assumption}
\newtheorem{theorem}{Theorem}
\newtheorem{example}{Example}
\newtheorem{property}[theorem]{Property}

\newcommand{\myse}{\IEEEyessubnumber} % setting subnumbering
\newcommand{\myses}{\myse\IEEEeqnarraynumspace} % setting subnumbering, shifted

\newcommand{\set}[1]{\mathcal{#1}}

% standard enumerate
\newcommand{\bn}{\begin{enumerate}}
\newcommand{\en}{\end{enumerate}}

% standard itemize
\newcommand{\bi}{\begin{itemize}}
\newcommand{\ei}{\end{itemize}}

% standard equation array
\newcommand{\be}{\begin{IEEEeqnarray}{rCl}}
\newcommand{\ee}{\end{IEEEeqnarray}}

% no equation numbering and no labeling
\newcommand{\benl}{\begin{IEEEeqnarray*}}
\newcommand{\eenl}{\end{IEEEeqnarray*}}

% no labeling
\newcommand{\bel}{\begin{IEEEeqnarray}}
\newcommand{\eel}{\end{IEEEeqnarray}}

% no equation numbering
\newcommand{\ben}{\begin{IEEEeqnarray*}{rCl}}
\newcommand{\een}{\end{IEEEeqnarray*}}

\newcommand{\barr}{\begin{array}}
\newcommand{\earr}{\end{array}}

\newenvironment{definition}[1][Definition:]{\begin{trivlist}
\item[\hskip \labelsep {\it #1}]}{\end{trivlist}}

\newcommand{\ud}{\mathrm{d}} % d in integration

\newcommand{\FigSize}{1}
\newcommand{\FigSizeSmall}{0.5}

\newcommand{\avesnr} {\bar{\gamma}} % average SNR
\newcommand{\snr} {\gamma} % instantaneous SNR

\newcommand{\re}[1]{(\ref{#1})}

\newcommand{\Pe} {P_{\mathrm {e}}} % BER

\newcommand{\goodgap}{%
\hspace{\subfigtopskip}%
\hspace{\subfigbottomskip}}

\newcommand{\dhat}[1]{\Hat{\Hat{#1}}} % double hat
\newcommand{\that}[1]{\Hat{\Hat{\Hat{#1}}}} % triple hat
\newcommand{\dtilde}[1]{\Tilde{\Tilde{#1}}} % double tilde
\newcommand{\ttilde}[1]{\Tilde{\Tilde{\Tilde{#1}}}} % triple hat

\newcommand{\trace}[1]{\mathrm{tr}\{#1\}}

\newcommand{\mi}{I}
\newcommand{\reward}{\mathsf{r}}
\newcommand{\TPaveopt}{\TPave^{\star}}
\newcommand{\TPave}{\mathsf{T}}
\newcommand{\valuefun}{h}
\newcommand{\statenew}{\mathsf{s}'}
\newcommand{\state}{\mathsf{s}}
\newcommand{\statespace}{\mathcal{S}}
\newcommand{\chanspace}{\Gamma}
\newcommand{\Hspace}{\mathcal{H}}
\newcommand{\battspace}{\mathcal{B}}
\newcommand{\CuBrac}[1]{\left\{#1\right\}}
\newcommand{\action}{\mathsf{a}}
\newcommand{\setT}{\mathcal{S}}

\renewcommand{\FigSize}{1}%{0.8}

\maketitle

%\vspace{-2.25cm}

\begin{abstract}
We consider the use of energy harvesters, in place of conventional batteries with fixed energy storage, for point-to-point wireless communications. In addition to the challenge of transmitting in a channel with time selective fading, energy harvesters provide a perpetual but unreliable energy source. In this paper, we consider the problem of energy allocation over a finite horizon, taking into account channel conditions and energy sources that are time varying, so as to maximize the throughput. Two types of side information (SI) on the channel conditions and harvested energy are assumed to be available: causal SI (of the past and present slots) or full SI (of the past, present and future slots). We obtain structural results for the optimal energy allocation, via the use of dynamic programming and convex optimization techniques. In particular, if unlimited energy can be stored in the battery with harvested energy and the full SI is available, we prove the optimality of a water-filling energy allocation solution where the so-called water levels follow a staircase function.
\end{abstract}
\begin{keywords}
Energy harvesting, wireless communications, optimal policy, dynamic programming, convex optimization.
\end{keywords}

%\newpage

\section{Introduction}

In conventional  wireless communication systems, the communication devices have access to a fixed power supply, or are powered by replaceable or rechargeable batteries to enable user mobility. In these cases, the transmissions are limited by power constraints for safety reasons, or by the sum energy constraint so as to prolong operating time for battery-powered devices.
In other communication systems, however, a fixed power supply is not readily available, and even replacing the batteries periodically may not be a viable option if the replacement is considered to be too inconvenient (when thousands of senor nodes are scattered throughout the building), too dangerous (the devices may be located in toxic environments) or even impossible (when the devices are embedded in building structures or inside human bodies).
In such situations, the use of energy harvesting for wireless communications appears appealing or sometimes even essential.
Examples of energy that can be harvested include solar energy, piezoelectric energy and thermal energy, etc.

For transmitters that are powered by energy harvesters, the energy that can potentially be harvested is unlimited. Typically, energy is replenished by the energy harvester, while expended for communications or other processing; any unused energy is then stored in an energy storage,  such as a rechargeable battery.
However, unlike conventional communication devices that are subject only to a power constraint or a sum energy constraint, transmitters with energy harvesting capabilities are, in addition, subject to other {\em energy harvesting constraints}.
Specifically, in every time slot, each transmitter is constrained to use at most the amount of stored energy currently available, although more energy may become available in the future slots.
Thus, a causality constraint is imposed on the use of the harvested energy.

Several contributions in the literature have considered using energy harvester as an energy source, in particular based on the technique of dynamic programming \cite{Bertsekas}. In \cite{Fu03}, the problem of maximizing a reward that is linear with the energy used is studied.
In  \cite{Sharma10}, the discounted throughput is maximized over an infinite horizon, where queuing for data is also considered.
In \cite{Kansal}, adaptive duty cycling is employed for throughput maximization and implemented in practical systems.
% In \cite{YangUlukus10}, the minimum time required to complete the transmission of a fixed amount of data is studied.
In \cite{OzelUlukus10}, an information-theoretic approach is considered  where the energy is harvested at the level of channel uses.
In \cite{Ozeletc11}, and the references therein, optimal approaches are also considered for throughput maximization over AWGN or fading channels.

In this work, we consider the problem of maximizing the throughput via energy allocation over a finite horizon of $K$ time slots. The channel signal-to-noise ratios (SNRs) and the amount of energy harvested change over different slots.
Our aim is to study the structure of the maximum throughput and the corresponding optimal energy allocation solution, such as concavity and monotonicity. These results may be useful for developing heuristic solutions, since the optimal solutions are often complex to obtain in practice.
We consider two types of side information (SI) available to the transmitter:
\bi
\item {\em causal SI}, consisting of past and present channel conditions, in terms of SNR, and the amount of energy harvested in the past slots, or
\item {\em full SI}, consisting of past, present and future channel conditions and amount of energy harvested.
\ei
The case of full SI may be justified if the environment is highly predictable, e.g., the energy is harvested from the vibration of motors that are turned on only during fixed operating hours and line-of-sight is available for communications.

Our contributions are as follows.
Given causal SI, and assuming that the variations in the channel conditions and energy harvested are modeled by a first-order Markov process, we obtain the optimal energy allocation solution by dynamic programming, which can be computed offline and stored in a lookup table for implementation. Moreover, we obtain structural results to characterize the optimal solution.
% added in revision
%The optimal energy allocation can be computed offline and stored in a lookup table for implementation.
%During online operation, the device simply inputs the current state (comprising the battery level and channel state information) into the table to obtain the required energy for transmission. The complexity of the online operation is thus very low.
%
% proposed simple heuristic algorithm that works well in practice
Given full SI, we obtain a closed-form solution for $K=2$ slots. We also obtain the structure of this optimal solution for arbitrary $K$ with unlimited energy storage. The optimal solution then has a water-filling interpretation, as in \cite{Goldsmith97IT}. However, instead of a single water level, there are multiple so-called water levels that are non-decreasing over time, i.e., the water levels follow a staircase-like function.
Finally, we propose a heuristic scheme that uses only causal SI. Compared to a naive scheme, the proposed scheme performs relatively close to the optimal throughput obtained with full SI in our numerical studies.

This paper is organized as follows.
Section~\ref{sec:intro} gives the system model.
Then, Section~\ref{sec:causal} considers optimal schemes with availability of causal SI of the channel conditions and harvested energy. Section~\ref{sec:full-arbBmax} considers optimal schemes with availability of full SI with a constraint on the maximum amount of energy that can be stored on the battery, while Section~\ref{sec:full-infBmax} considers the specific case where this constraint is removed.
%Section~\ref{sec:heu} draws insights from the last two sections to obtain heuristic schemes with availability of causal side information.
Section~\ref{sec:num} shows numerical results for the various schemes.
Finally, Section~\ref{sec:con} concludes the paper.

%We write $X_1, \cdots, X_K$ collectively as a length-$K$ vector $X^K$.
% We say the function $f(x)$ is increasing (decreasing) in $x$ if $f(x')\geq (\leq) f(x)$ for all $x'\geq x$.

\section{System Model}\label{sec:intro}
\newcommand{\Bmax}{B_{\mathrm{max}}}

\begin{figure}%[f]
\centering
%\psfrag{h}{$H_k$}
%\psfrag{t}{${T_k}$}
%\psfrag{w}{$W_k$}
%\psfrag{x}{$X^n_k$}
%\psfrag{B}{$B_{k}$}
%\psfrag{s}{\hspace{-0.15cm}$\sqrt{T_k}X^n_k$}
\includegraphics[scale=1.25]{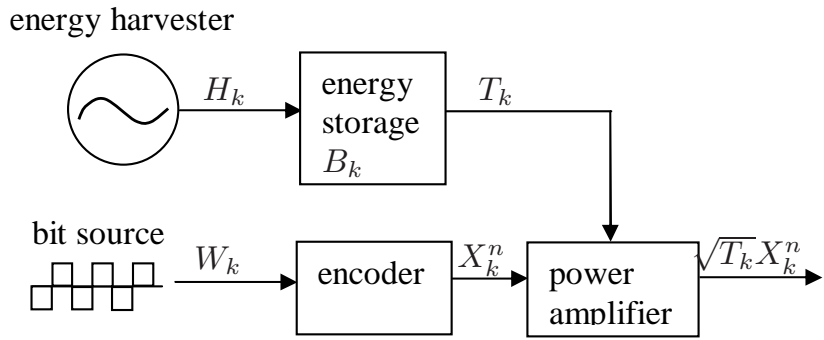}
\caption{Block diagram of a transmitter powered by an energy harvester. Energy is replenished by an energy harvester but is drawn for transmission.}
\label{fig:diag}
\end{figure}

For simplicity, each packet transmission is performed in one time slot. Each time slot allows $n$ symbols to be transmitted, where $n$ is assumed to be sufficiently large for reliable decoding. We index time by the slot index $k\in \mathcal{K}\triangleq \{1, \cdots, K\}$.
%For convenience of description, let $k^-$ be the time instant just before slot $k$. %We denote $t^{--}$ as the time instant just before $t^-$.
We assume that there is always back-logged data available for transmission.
In slot $k$, a message $W_k\in\{1,\cdots, 2^{nR_k}\}$ is sent, where the rate $R_k\geq 0$ in bits per symbol can be selected.

We consider a point-to-point, flat-fading, single-antenna communication system.
As shown in Fig.~\ref{fig:diag}, the transmitter is powered by an energy harvester that takes in harvested energy as input, that is then stored in an energy storage. The bits to be sent are encoded by an encoder, then sent by a power amplifier that uses the stored energy.
Energy is measured on a per symbol (or channel use) basis, hence we use the terms energy and power interchangeably.

Consider slot $k\in\mathcal{K}$.
At time instant $k^-$, which denotes the time instant just before slot $k$, the battery has available $B_{k}\geq 0$ amount of stored energy per symbol. For transmission, the message $W_k$ is first encoded as data symbols $X^n_k\triangleq [X_{1k}, \cdots, X_{nk}]$ of length $n$, where we normalize $\sum_{i=1}^n |X_{ik}|^2/n=1.$
Then the transmitter transmits packet $k$ in slot $k$ as $\sqrt{T_k}X^n_k$, where $0\leq T_k\leq B_{k}$ is the energy per symbol used by the power amplifier.
Except for transmission, we assume the other circuits in the transmitter consume negligible energy.

\subsubsection{Mutual Information}
We assume the channel is quasi-static for every slot $k\in\mathcal{K}$ with SNR $\snr_k$.
The maximum reliable transmission rate in slot $k$ is then given by the mutual information $\mi(\snr_k, T_k)\geq 0$ in bits per symbol \cite{Cover06}.
In general, we assume that $\mi(\snr, T)$ is concave in $T$ given $\snr$, and is increasing in $T$ for all $\snr$.
For example, we may employ Gaussian signalling for transmission over a complex Gaussian channel \cite{Cover06},  which gives
\be\label{eqn:gaussianmi}
\mi(\snr, T)=\log_2(1+T\snr).
\ee

%The maximum reliable transmission rate is given by the mutual information
%\be\label{eqn:reward}
%\mi(\snr_k, T_k)=\log(1+T_k \snr_k)
%\ee
%in bits per symbol, assuming complex-valued Gaussian signaling is used.

\subsubsection{Battery Dynamics}

In general, let us denote a vector  of length $k$ as $Y^{k}=[Y_1, \cdots, Y_k]$, e.g., the battery energy from slot $1$ to slot $k$ is given by $B^{k}=[B_1, \cdots, B_k]$.
While transmitting packet $k$, the energy harvester collects an average energy of $H_{k}\geq 0$ per symbol, which is then stored in the battery.
At time instant $(k+1)^-$, the energy stored is updated in general as
\ben%\label{eqn:battery0}
B_{k+1} = f\left(B^{k}, T^{k}, H^{k}\right), k\in\mathcal{K},
\een
where the function $f$ depends on the battery dynamics, such as the storage efficiency and memory effects.
Intuitively, we expect $B_{k+1}$ to increase (or remains the same) if $B_{k}$ or $H_k$ increases, or if $T_{k}$ decreases. As a good approximation in practice, we assume the stored energy increases and decreases linearly provided the maximum stored energy in the battery $\Bmax$ is not exceeded, i.e.,
\be\label{eqn:battery1}
B_{k+1} = \min \{B_{k} -  T_{k} + H_k, \Bmax\}, k\in\mathcal{K}.
\ee
We assume the initial stored energy $B_1$ is known, where  $0\leq B_1\leq \Bmax$.
Thus, $\{B_{k}\}$ follows a deterministic first-order Markov model that depends only on the immediate past random variables.

\subsubsection{Channel and Harvest Dynamics}
%Let $\state_k \triangleq (\snr_{k}, H_{k})$ for $k=0,\cdots,K$ denotes the present state of the system at time $k$. Also, let ${\state}^k\triangleq (\snr^k, H^k)$ for $k=1,\cdots,K$  denotes the present and past states of the system at time $k$.

To model the unpredictable nature of energy harvesting and the wireless channel over time, we model%
 \footnote{The harvested energy in slot $K$, namely $H_K$, cannot be used for transmission in slots $1$ to slot $K$ and so does not affect the throughput. }
 $H^{K-1}$ and $\snr^K$ jointly as a random process described by their joint distribution.  The exact distribution  depends on the energy harvester used and the wireless channel environment.

In most typical operating scenarios, both the wireless channels and the harvested energy vary slowly over time.
To account for these variations, the SNR $\snr_k$ is assumed to be constant in each slot and follow a first-order stationary Markov model over time $k$, see e.g. similar assumptions in \cite{SadeghiKennedyRapajicShams08}. Also, the harvested energy $H_k$  is modeled as  first-order stationary Markov model over time $k$, where the accuracy of this model is justified by empirical studies when solar energy is harvested \cite{hoICCS10}.
%To yield tractable analysis, we model the variation over time with a first-order stationary Markov model.
Given $H_0=\hat{H}_0$ and $\snr_1=\hat{\snr}_1$, the joint pdf of  $H^{K-1}$ and $\snr^K$ thus becomes
%
%In wireless channels, the channel between the transmitter and receiver varies over time.
%To model the channel variations, let $\snr_k$ be the SNR at the receiver when packet $k$ is transmitted, described by the pdf $p_{\snr^K}(\snr^K)$.
%We normalize $\snr^K$ such that $\mathbb{E}[\snr_k]=1$ for all $k$ and assume $\snr^K$ and $H^K$ are independent. For Rayleigh-fading channel,  $\snr_k$ is chi-squared distributed with two degrees of freedom. The SNR of the channel when packet $k$ is transmitted is given by $T_k \snr_k/T$, where $T$ is the transmission time. Without loss of generality, we let $T=1$.
%
%For simplicity, we assume i.i.d harvest power and SNR with joint distribution
\be\nonumber
&& p_{H^{K-1}\hspace{-0.00cm}, \snr^K}(H^{K-1}\hspace{-0.cm}, \snr^K|H_0=\hat{H}_0, \snr_1=\hat{\snr}_1)
\\
%p_{{\state}^K}\left({\state}^K\right)=
=&& \prod_{k=3}^K \hspace{-0.00cm} p_{H_{k-1}}(H_{k-1}|H_{k-2}) p_{\snr_k}(\snr_k|\snr_{k-1}) \nonumber \\
&& \times p_{H_{1}}(H_{1}|H_0=\hat{H}_0) p_{\snr_2}(\snr_2|\snr_1=\hat{\snr}_1)
%&& \cdot \delta(\snr_1-\hat{\snr}_1) \cdot \delta({H}_0-\hat{H}_0)
\label{eqn:distHsnr}
%p_{H^{K-1}\hspace{-0.00cm}, \snr^K}(H^{K-1}\hspace{-0.1cm}, \snr^K|H_0 , \snr_1)\hspace{-0.1cm}=\hspace{-0.175cm}
%%p_{{\state}^K}\left({\state}^K\right)=
%\prod_{k=2}^K \hspace{-0.05cm} p_{H_{k-1}}(H_{k-1}|H_{k-2}) p_{\snr_k}(\snr_k|\snr_{k-1})\;\;\;\;\;
\ee
where $p_{H_k}(\cdot|\cdot)$ and $p_{\snr_k}(\cdot|\cdot)$ are independent of $k$.
In \re{eqn:distHsnr}, we have also assumed that the harvested energy and the SNR are independent, which is reasonable in most practical scenarios.
%We assume the channel power and the harvest energy to be independent. We assume both $H_k$ and $\snr_k$ to be i.i.d., i.e., $p_{H^K}(H^K)=\Pi_k p_H(H_k)$ and $p_{\snr^K}(\snr^K)=\Pi_k p_{\snr}(\snr_k)$. It is straightforward to extend the results to the case when either one or both $H_k, \snr_k$ are Markovian, i.e., $p_{H^K}(H^K)=\Pi_k p_H(H_k|H_{k-1})$ and $p_{\snr^K}(\snr^K)=\Pi_k p_{\snr}(\snr_k|\snr_{k-1})$.
In this paper, we assume that the joint distribution \re{eqn:distHsnr} is known, which may be obtained via long-term measurements in practice.

\subsubsection{Overall Dynamics}

Let us denote the {\em state} $\state_k= (\snr_{k}, H_{k-1}, B_{k}), k\in\mathcal{K}$, or simply $\state$ if the index $k$ is arbitrary. Let the accumulated states be $\state^{k} \triangleq (\state_1, \cdots, \state_{k}),$ $k\in\mathcal{K}$.

We assume the initial state $\state_1\triangleq (\snr_{1}, H_{0}, B_1)$ to be always known at the transmitter, which may be obtained causally prior to any transmission.
From \re{eqn:battery1} and \re{eqn:distHsnr}, given $\state_1=\hat{\state}_1$, the states thus follow a first-order Markov model:
\be\label{eqn:distall}
p_{\state^{K}}(\state^{K}|{\state}_1=\hat{\state}_1)=
\prod_{k=3}^K p_{\state_{k}}(\state_{k}|\state_{k-1}) \times  p_{\state_{2}}(\state_{2}|{\state}_1=\hat{\state}_1). %\cdot \delta(\state_1-\hat{\state}_1).
\ee
In particular, \re{eqn:distall} includes the special cases where the states are independent, i.e., $p_{\state_{k}}(\state_{k}|\state_{k-1})=p_{\state_{k}}(\state_{k})$, or where the states are deterministic rather than random, i.e., $p_{\state_{k}}(\state_{k}|\state_{k-1})=\delta(\state_{k}-\hat{\state}_{k})$, where $\delta(\cdot)$ is the Dirac delta function.

In the next three sections, we consider the problem of maximizing the throughput subject to energy harvesting constraints, given either causal SI or full SI.

\section{Causal Side Information}\label{sec:causal}
\subsection{Problem Statement}

%We consider throughput maximization over a finite horizon. %The problem can also be extended to the case of infinite horizon.

We first consider the case of causal SI, in which the transmitter is given knowledge%
\footnote{It can be shown that having knowledge of previous states $\state^{k-1}$ does not improve throughput, due to the Markovian property of the states in \re{eqn:distall}.}
of $\state_{k}$ before packet $k$ is transmitted, where $k\in\mathcal{K}$.
% >> added in revision
That is, at slot $k$ the transmitter only knows the present channel SNR $\snr_{k}$, past harvested energy $H_{k}$ and present energy stored in the battery $B_k$.
In practice, for instance, the receiver feeds back $\snr_{k}$ shortly before transmission, while the transmitter infers $H_{k-1}$ and $B_k$ from its energy storage device.
%For example, the transmitter only knows ${\state}_0$ before transmitting packet $1$.
%After transmission, state $\state_1$ is additionally made available to the transmitter. Hence, the transmitter knows $(\state_0, \state_1)$ before the second packet is transmitted, and so on.
We say that causal SI is available as future states are not {\it a priori} known.
Thus, this allows us to model and treat the unpredictable nature of the wireless channel and harvesting environment.

The causal SI is used to decide the amount of energy $T_k$ for transmitting packet $k$.
We want to maximize the throughput, i.e., the expected mutual information summed over a finite horizon of $K$ time slots, by choosing a deterministic power allocation policy $\pi=\{T_k({\state}_{k}), \forall \state_{k}, k=1,\cdots, K\}$.
The policy can be optimized offline and implemented in real time via a lookup table that is stored at the transmitter.

A policy is feasible if the energy harvesting constraints $0\leq T_k(\state_{k})\leq B_k$ is satisfied for all possible $\state^k$ and all $k \in\mathcal{K}$; we denote the  space of all feasible policies as $\Pi$.
Mathematically, given $\state_1$, the maximum throughput is
\be\label{eqn:prob}
\mathcal{T}^{\star} &=& \max_{\pi\in\Pi} \mathcal{T}(\pi),
\ee
where
\be\label{eqn:T}
\mathcal{T}(\pi)
&=& \sum_{k=1}^K \mathbb{E}\left[\mi(\snr_k, T_k({\state}_{k}))|{\state}_1, \pi \right].
\ee
In \re{eqn:T}, the $k$th summation term represents the throughput of packet $k$ (after expectation); its expectation is performed over all (relevant) random variables given initial state $\state_1$ and policy $\pi$.

For example, if $K=2$ and a given policy, \re{eqn:T} simplifies as
\be
%&&  \\
%&=&
%\mathbb{E}_{{\snr}_1}\left[\mi(\snr_1, T_1({\state}_0))|{\state}_0, \pi \right]
%+
%\mathbb{E}_{{\state}_1}\left[\mi(\snr_2, T_2({\state}^1))\left|{\state}_0, \pi \right]\right.
%%.
%\nonumber \\
\mathcal{T} &=
\mi(\snr_1, T_1({\state}_1)) %
+
\mathbb{E}_{{\state}_2}\left[\mi(\snr_2, T_2({\state}_2))  \Big|
{\state}_1 \right]
%\nonumber
%\\ \nonumber
%&=&
%\mathbb{E}_{{\state}_1}\left[\mi(\snr_1, T_1({\state}_0))
%+
%\mathbb{E}_{{\state}_2}\left[\mi(\snr_2, T_2({\state}^1))|{\state}_1, \pi \right]
%\Big|{\state}_0, \pi
%\right].
%&=
%\mathbb{E}_{{\snr}_1}\Big[&\mi(\snr_1, T_1({\state}_0))+
%\mathbb{E}_{{\snr}_2}\left[\mi(\snr_2, T_2({\state}_1))|{\snr}_1, \pi \right]
%\Big|{\snr}_0, \pi
%\Big]
%\;\;\;
%&=&\int_{{\snr}_1}  \mi(\snr_1, T_1({\state}_0)) p(\snr_1|\snr_0) \ud \snr_1
%+
%\int_{{\snr}_2}  \mi(\snr_2, T_2({\state}^1)) p(\snr_1|\snr_0) \ud \snr_1
\label{eqn:ex1}\ee
subject to $0\leq T_1\leq B_1$ for the first term and $0\leq T_2\leq B_2=\min \{B_{1} -  T_{1} + H_1, \Bmax\}$ for the second term.
%The inner expectation is expressed as $\mathbb{E}_{{\snr}_2}\left[\mi(\snr_2, T_2({\state}^1))|{\snr}_1,  \pi \right]$
%subject to  $0\leq T_2\leq B_2$.
%The second equality is due to  \re{eqn:distHsnr} and \re{eqn:distall}.
%The policy $\pi$ is feasible if
Clearly, the transmission energy $T_1$ in the first slot affects the stored energy $B_2$ available in the second slot, which in turn affects the energy $T_2$ to be allocated.

In general the optimization of $\{T_k\}$ cannot be performed independently due to the energy harvesting constraints, as shown also in the above example. %In the above example, $T_2$ is constrained by $B_2$, which in turn depends on $T_1$.
Instead, for the above example, we can first optimize $T_2$ given all possible $T_1$ (and hence all possible $B_2$), then optimize for $T_1$ with $T_2$ replaced by the optimized value (as a function of $T_1$).
This approach, as will be suggested by dynamic programming in the general case, will be shown to be optimal.

%The computation of each expected mutual information term in \re{eqn:T} depends only on $\snr_0$ and $\pi$, assuming the joint pdf factors as \re{eqn:distHsnr}; this is evident from \re{eqn:ex1} for $K=2$.
%That is, the harvest energy $\{H_k\}$ and the stored battery $\{B_k\}$ does not affect $\mathcal{T}(\pi)$. However, to check if a policy is feasible requires the knowledge of $B^K.$
%Hence, in general a sufficient state for obtaining $\mathcal{T}^{\star}$ is the state $\tilde{\state}^{k-1} \triangleq (\tilde{\state}_0, \cdots, \tilde{\state}_{k-1})$
%
%$\tilde{\state}_k\triangleq (\snr_{k-1}, B_k)$ and let the accumulated state  for $k\in\mathcal{K}$.

%\be\label{eqn:T1}
%\mathcal{T}(\pi)
%&=& \sum_{k=1}^K \mathbb{E}_{{\state}_k}\left[\mi(\snr_k, T_k({\state}^k))|{\state}_0, \pi \right]
%\ee

\subsection{Optimal Solution}

%state $\state_k$ is available prior to every transmission.

The optimization problem \re{eqn:prob} is solved by dynamic programming in Lemma~\ref{lem:1}.

%We denote the SNR in the next slot as $\tilde{\snr}$. %Thus, given SNR in the present slot as $\snr$, the

\begin{lemma}\label{lem:1}
%Given $\state_1=(\snr_1, B_1)$, the maximum throughput is given by
%\be\label{eqn:opt}
%J_1(\snr_1, B_1)= \max_{\pi\in\Pi} \sum_{k=1}^K \mathbb{E}\left[\mi(\snr_k, T_k)|\snr_k, T_k, \pi\right].
%\ee
%where the expectation is performed over all random variables $H^K, \snr^K$ given $\snr_k, B_k$ using policy $\pi$.
Given initial state $\state_1=(\snr_1,H_0,B_1)$, the maximum throughput
$\mathcal{T}^{\star}$ is given by $J_1(\state_1)$, which can be computed recursively based on Bellman's equations, starting from $J_K(\state_K)$,  $J_{K-1}(\state_{K-1})$, and so on until $J_{1}(\state_1)$:
\be\label{eqn:J}
J_{K}(\snr,H,B) &=& \max_{0\leq T\leq B}
{\mi}(\snr, T)
= {\mi}(\snr, B) ,
%\label{eqn:bellman}
%\log(1+\snr T) = \log(1+\snr B),
\IEEEyessubnumber
\label{eqn:J_K1}\\
J_k(\snr,H,B) &=& \max_{0\leq T\leq B} {\mi}(\snr, T)
%\log(1+\snr T)
+ \bar{J}_{k+1}(\snr, H, B-T)
\;\;\;\;\;\;
\IEEEyessubnumber
\label{eqn:J_k}
\ee
for $k=1,\cdots,K-1$,
where
\be\nonumber %\label{eqn:I_expected}
%\bar{\mi}(\snr, T) &=& \mathbb{E}_{\tilde{\snr}}[ \mi(\tilde{\snr}, T)|\snr]  \\
 && \bar{J}_{k+1}(\snr,H,x)  \\
 &=& \mathbb{E}_{\tilde{H}, \tilde{\snr}}\left[ J_{k+1}(\tilde{\snr}, \tilde{H}, \min\{\Bmax, x+\tilde{H}\}) \big|\snr, H \right].
\label{eqn:J_expected}
\ee
In \re{eqn:J_expected}, $\tilde{H}$ denotes the harvested energy in the present slot given the harvested energy ${H}$ in the past slot, and
 $\tilde{\snr}$ denotes the SNR  in the next slot given the SNR ${\snr}$ in the present slot.
%$H_{k}$ and $\snr_{k+1}$.
An optimal policy is denoted as $\pi^{\star}=\{T^{\star}_k({\state}_{k}), \forall \state_{k}, k=1,\cdots, K\}$, where $T^{\star}_k(\state_{k})$ is the optimal $T$ that solves \re{eqn:J}.
\end{lemma}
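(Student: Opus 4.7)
The plan is to prove Lemma~\ref{lem:1} by backward induction on the slot index $k$, which is the standard route for establishing Bellman's principle of optimality in finite-horizon dynamic programming. For each $k$, I will define the optimal \emph{cost-to-go} $V_k(\state_k)$ as the maximum of $\sum_{j=k}^{K}\mathbb{E}[\mi(\snr_j,T_j(\state_j))\mid \state_k,\pi]$ over all feasible Markovian policies from slot $k$ onward, and then show by induction that $V_k(\state_k)=J_k(\state_k)$ with the recursion \re{eqn:J_K1}--\re{eqn:J_expected}.

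The base case $k=K$ is immediate: since only the current slot remains, feasibility requires $0\le T_K\le B_K$, and because $\mi(\snr_K,T)$ is monotonically increasing in $T$ by assumption, the maximizer is $T_K^{\star}=B_K$, yielding $V_K(\snr,H,B)=\mi(\snr,B)=J_K(\snr,H,B)$. For the inductive step, assume $V_{k+1}=J_{k+1}$. Given $\state_k=(\snr_k,H_{k-1},B_k)$ and any feasible action $T_k\in[0,B_k]$, the updated battery is $B_{k+1}=\min\{B_k-T_k+H_k,\Bmax\}$ from \re{eqn:battery1}. I will split the conditional expected throughput from slot $k$ onward as
\begin{IEEEeqnarray*}{rCl}
\mi(\snr_k,T_k)+\mathbb{E}\!\left[\sum_{j=k+1}^{K}\mi(\snr_j,T_j(\state_j))\,\bigg|\,\state_k,\pi\right],
\end{IEEEeqnarray*}
and apply the tower property by conditioning on $\state_{k+1}$. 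The Markov property \re{eqn:distall}, together with the deterministic update of $B_{k+1}$, ensures that the inner conditional expectation depends on the past only through $\state_{k+1}$, so the inner supremum equals $V_{k+1}(\state_{k+1})=J_{k+1}(\state_{k+1})$ by the inductive hypothesis. Taking expectation of $J_{k+1}(\state_{k+1})$ over $(\tilde{\snr},\tilde{H})=(\snr_{k+1},H_k)$ given $(\snr_k,H_{k-1})$ reproduces precisely $\bar{J}_{k+1}(\snr_k,H_{k-1},B_k-T_k)$ as defined in \re{eqn:J_expected}, after using $B_{k+1}=\min\{\Bmax,(B_k-T_k)+\tilde{H}\}$.

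Maximizing over $T_k\in[0,B_k]$ then gives $V_k(\state_k)=J_k(\state_k)$, completing the induction. The identity $\mathcal{T}^{\star}=J_1(\state_1)$ follows by taking $k=1$. Finally, the claim about $\pi^{\star}$ follows because each step of the recursion exhibits an achieving $T_k^{\star}(\state_k)$; concatenating these slot-wise maximizers yields a feasible Markovian policy achieving the bound, while no history-dependent randomized policy can do better, again by the Markov property of $\{\state_k\}$.

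The main obstacle I anticipate is the careful justification of two interchanges: (i) restricting attention to deterministic Markovian policies $\pi=\{T_k(\state_k)\}$ without loss of optimality, which I will argue from the Markov structure of the state and the additivity of the reward (so that no information beyond $\state_k$ is payoff-relevant at slot $k$); and (ii) the interchange of the supremum over future controls with the outer conditional expectation, which requires the feasible set at slot $j>k$ to depend only on $\state_j$, a fact guaranteed by the per-slot constraint $0\le T_j\le B_j$ and the deterministic battery update. Once these two points are settled, the rest of the argument is the textbook derivation of Bellman's equation applied to \re{eqn:prob}--\re{eqn:T}.
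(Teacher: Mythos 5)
Your proposal is correct and follows essentially the same route as the paper: the paper's proof simply invokes Bellman's equation for finite-horizon dynamic programming (citing \cite{Bertsekas}) together with the battery update \re{eqn:battery1} and the Markov structure \re{eqn:distHsnr}, and your backward-induction argument is precisely the textbook derivation of that result spelled out, with the tower-property/Markov-property step correctly reproducing $\bar{J}_{k+1}$ in \re{eqn:J_expected}. Note also that your concern about restricting to deterministic Markovian policies is largely moot, since the problem \re{eqn:prob} already defines $\Pi$ as the class of deterministic policies of $\state_k$ (the paper's footnote handles the history-dependence point).
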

\begin{proof}
The proof follows by applying Bellman's equation  \cite{Bertsekas} and using \re{eqn:battery1} and \re{eqn:distHsnr}.
%It is easy to see this is true if $K=2$. Given $\state_1$, the maximum second term in \re{eqn:ex1} is maximized by
\end{proof}

In \re{eqn:J_K1}, the optimal maximization  is trivial: the interpretation is that we use all available energy for transmission in slot $K$.
We can interpret the maximization in \re{eqn:J_k} as a tradeoff between the present and future rewards.
This is because the mutual information ${\mi}(\cdot,\cdot)$ represents the present reward, while $\bar{J}_{k+1}$, commonly known as the value function, is the expected future mutual information accumulated from slot $k+1$ until slot $K$.

Next, we obtain structural properties of the maximum throughput $\mathcal{T}^{\star}$  in \re{eqn:prob} and the corresponding optimal policy $\pi^{\star}$ in Theorems~\ref{thm:concave} and \ref{thm:monotone}. The proofs are given in the Appendix.

%\begin{proof}
%\end{proof}
%\overline{}
\begin{theorem}\label{thm:concave}
% \hl{Suppose $\mi(\snr, T)$ is concave in $T$ given $\snr$ and also concave in $T$ given $\snr$}.
Suppose that $\mi(\snr, T)$ is concave in $T$ given $\snr$. Given $\snr$ and $H$, then
%$J_k$ and $\bar{J}_k$ in Lemma~\ref{lem:1} satisfy
\begin{enumerate}
\item $J_k(\snr, H, B)$ in \re{eqn:J} is concave in $B$ for $k\in\mathcal{K}$;
\item $\bar{J}_k(\snr, H, B)$ in \re{eqn:J_expected} is concave in $B$ for $k\in\mathcal{K}$.
\end{enumerate}
Thus, $\mathcal{T}^{\star}=J_1(\state_1)$ is concave in $B_1$.
\end{theorem}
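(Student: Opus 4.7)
The natural approach is backward induction on $k$, starting from $k=K$ and descending to $k=1$, proving both claims simultaneously. The conclusion for $\mathcal{T}^{\star}$ then follows immediately as the $k=1$ instance of claim (1).

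For the base case $k=K$, equation \re{eqn:J_K1} gives $J_K(\snr,H,B)=\mi(\snr,B)$, which is concave in $B$ by the standing assumption on $\mi(\snr,\cdot)$. I would then use this to verify claim (2) for $k=K$ before entering the inductive machinery.

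For the inductive step, suppose claim (1) holds for $k+1$; I want to derive (2) for $k+1$ and then (1) for $k$. The first substep (claim (2) for $k+1$) is where I expect the subtlety to lie. The inner argument of $J_{k+1}$ in \re{eqn:J_expected} is $\min\{\Bmax,\,x+\tilde{H}\}$, which is concave in $x$ (minimum of two affine functions), but a concave function composed into another concave function need not be concave. To make the composition work I also need $J_{k+1}(\snr,H,\cdot)$ to be \emph{nondecreasing}: then the composition of a nondecreasing concave function with a concave function is concave. The monotonicity itself is easy to establish as a separate inductive claim by a feasibility argument—any policy feasible from battery level $B$ is feasible from any $B'\geq B$, so the value can only grow. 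Once concavity of the integrand in $x$ is established for each realization of $(\tilde{\snr},\tilde{H})$, taking expectation preserves concavity, giving (2) for $k+1$.

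For the second substep (claim (1) for $k$), the objective on the right of \re{eqn:J_k} is $\mi(\snr,T)+\bar{J}_{k+1}(\snr,H,B-T)$, jointly concave in $(T,B)$ since $B-T$ is affine and the outer functions are concave by assumption and by the substep above. The constraint set $\{(T,B):0\leq T\leq B\}$ is convex. Concavity of the value function $J_k$ in $B$ then follows from the standard partial-maximization argument: for $B_1,B_2$ with optimizers $T_1^\star,T_2^\star$, the convex combination $\lambda T_1^\star+(1-\lambda)T_2^\star$ is feasible at $\lambda B_1+(1-\lambda)B_2$, and joint concavity of the objective yields
\[
J_k(\lambda B_1+(1-\lambda)B_2)\;\geq\;\lambda J_k(B_1)+(1-\lambda)J_k(B_2).
\]
To close the induction cleanly I would piggyback a monotonicity claim alongside the concavity claim at each step, so that the composition argument in the next round has what it needs. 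The main obstacle is therefore not the algebra but the need to strengthen the induction hypothesis with monotonicity in order to handle the $\min\{\Bmax,\cdot\}$ clip inside the value function.
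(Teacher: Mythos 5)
Your proof is correct and follows the same backward-induction skeleton as the paper's: base case $J_K(\snr,H,B)=\mi(\snr,B)$, then alternately deduce concavity of $\bar{J}_{k+1}$ in \re{eqn:J_expected} from concavity of $J_{k+1}$ (expectation preserves concavity), and concavity of $J_k$ in \re{eqn:J_k} from concavity of $\bar{J}_{k+1}$. The two technical steps are executed differently, though. For the clip $\min\{\Bmax,x+\tilde{H}\}$, the paper rewrites $J_{k+1}(\tilde{\snr},\tilde{H},\min\{\Bmax,x+\tilde{H}\})$ as the pointwise minimum of the constant $J_{k+1}(\tilde{\snr},\tilde{H},\Bmax)$ and the concave function $J_{k+1}(\tilde{\snr},\tilde{H},x+\tilde{H})$, and uses that a pointwise minimum of concave functions is concave; but pulling the $\min$ outside $J_{k+1}$ is legitimate only because $J_{k+1}$ is non-decreasing in its battery argument, which is precisely the monotonicity you add explicitly to the induction hypothesis (non-decreasing concave composed with concave is concave). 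So your strengthened induction makes explicit a fact the paper uses silently; your feasibility argument for it is sound, and the needed ingredient that $\mi(\snr,T)$ is increasing in $T$ is a standing assumption of the system model. For the maximization step, the paper cites Rockafellar's result that the supremal convolution of concave functions is concave, whereas you give the elementary partial-maximization argument (joint concavity of $\mi(\snr,T)+\bar{J}_{k+1}(\snr,H,B-T)$ over the convex set $\{(T,B):0\leq T\leq B\}$, then maximize out $T$); the two are equivalent, yours being self-contained at the cost of a few extra lines. No gaps.
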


\begin{theorem}\label{thm:monotone}
%The optimal policy $\pi^{\star}$ in Lemma~\ref{lem:1} is monotonic in that
%:
%\begin{enumerate}
%\item $T^{\star}_k(\state_{k-1})$ is non-decreasing in $k\in\mathcal{K}$ if $\state_{k-1}=\state \; \forall k$.
%\item $T^{\star}_k(\state)$ is non-decreasing in $\snr$, $H$ and $B$, $k\in\mathcal{K}$.
%\item
Suppose that $\mi(\snr, T)$ is concave in $T$ given $\snr$.
Given $\snr$ and $H$, then the optimal power allocation $T^{\star}_k(\snr,H,B)$ that solves \re{eqn:J} is non-decreasing in $B$, where $k\in\mathcal{K}$.
%given the other two parameters in $\state$ is fixed
%\item $T^{\star}_k(\state)$ is non-decreasing in $\snr$ given $k, B$.
%\end{enumerate}
\end{theorem}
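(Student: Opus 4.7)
The plan is a downward induction on the slot index $k$, leveraging the concavity of $\bar{J}_{k+1}(\snr,H,\cdot)$ in its battery argument already supplied by Theorem~\ref{thm:concave}(ii). The base case $k=K$ is immediate from \re{eqn:J_K1}, since $T^{\star}_K(\snr,H,B)=B$ is trivially non-decreasing in $B$. For $k<K$, I would adopt the convention that ties in \re{eqn:J_k} are broken by selecting the smallest maximizer (well defined because the objective is continuous on the compact interval $[0,B]$), and prove monotonicity by contradiction via a standard exchange argument.

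Specifically, fix $\snr,H$ and two battery levels $B<B'$, with corresponding optima $T^{\star}$ and $T'^{\star}$. Suppose for contradiction that $T^{\star}>T'^{\star}$. Since $T'^{\star}<T^{\star}\leq B<B'$, both allocations are feasible at either battery level, so writing out the two optimality inequalities (with the $\snr,H$ arguments suppressed) and adding them cancels the mutual-information terms and leaves
\[
\bar{J}_{k+1}(B-T^{\star})+\bar{J}_{k+1}(B'-T'^{\star})\;\geq\;\bar{J}_{k+1}(B-T'^{\star})+\bar{J}_{k+1}(B'-T^{\star}).
\]
The two pairs of points have the same sum $B+B'-T^{\star}-T'^{\star}$, but $(B-T'^{\star},B'-T^{\star})$ is majorized by $(B-T^{\star},B'-T'^{\star})$, so the concavity of $\bar{J}_{k+1}$ delivers the reverse inequality. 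Hence both must hold with equality, which forces $T'^{\star}$ to also be a maximizer of \re{eqn:J_k} at $B$, contradicting the smallest-maximizer convention. An equivalent high-level framing is Topkis's monotonicity theorem: the objective $\mi(\snr,T)+\bar{J}_{k+1}(\snr,H,B-T)$ has increasing differences in $(T,B)$ because its mixed second difference reduces to a second difference of the concave function $\bar{J}_{k+1}$, while the feasible set $[0,B]$ is non-decreasing in $B$ in the set-inclusion order.

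The main subtleties I anticipate are (i) non-uniqueness of the maximizer, which is sidestepped by the selection convention above; and (ii) the $\min\{\Bmax,\cdot\}$ cap inside \re{eqn:J_expected}, which can flatten $\bar{J}_{k+1}$ over a region and thereby destroy strict concavity. Neither is a real obstacle: the exchange/Topkis argument needs only (non-strict) concavity of $\bar{J}_{k+1}$, which is exactly what Theorem~\ref{thm:concave}(ii) already provides, so the battery cap does not have to be revisited inside the present proof.
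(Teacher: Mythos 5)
Your proposal is correct and takes essentially the same approach as the paper: the paper's proof verifies precisely the conditions you identify---non-decreasing differences of $\mi(\snr,T)+\bar{J}_{k+1}(\snr,H,B-T)$ in $(B,T)$, obtained from the (non-strict) concavity of $\bar{J}_{k+1}$ in Theorem~\ref{thm:concave}, together with the ascending feasible interval $[0,B]$---and then invokes the cited comparative-statics result (Lemma~\ref{lem:monotonic}, i.e.\ the Topkis/Amir theorem you name), whereas your exchange argument simply reproves that step inline. Your explicit smallest-maximizer tie-breaking is, if anything, slightly more careful than the paper, which asserts uniqueness of the maximizer directly from concavity of the objective.
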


%The proofs for Theorems~\ref{thm:concave}, \ref{thm:monotone} can be obtained by suitable modification of the proofs in \cite{rockafellar70}, \cite{ross}.

The structural properties in Theorems~\ref{thm:concave} and \ref{thm:monotone} simplify the numerical computation of the optimal power allocation solution in Lemma~\ref{lem:1}, as shown in the next subsection.
%We can also show that $T^{\star}_k(\snr,H,B)$ is monotonic in $\snr$ or $H$ given further conditions of the the pdf \ref{eqn:distHsnr}, e.g. see \cite{ross}, \cite{Topkis}.

\subsection{Numerical Computations}

From \re{eqn:J_K1}, we get the optimal solution for slot $K$ as $T_K^{\star}(\state_{K})=B_K$.
Now, consider the problem of finding the optimal $T^{\star}_k(\state_k)$ to obtain $J_k(\state_k), k\in\{1,\cdots,K-1\}$.
%We can use Theorem~\ref{thm:concave} to obtain the optimal solution  $T_k^{\star}(\state_{k-1}), k<K$.
Let us fix the SNR and harvested energy as $\snr, H$, respectively, and drop these arguments when possible to simplify notations.
Consider the {\em unconstrained maximization} over all $T\geq0$, i.e., not subject to any energy harvesting constraint:
\be\label{eqn:num}
T_k^{\dagger}=\arg \max_{T\geq 0} g(T)
\ee
where we denote $g(T) = {\mi}(\snr, T)+ \bar{J}_{k+1}(B-T)$.
Since ${\mi}(\snr, T)$ is concave, and $\bar{J}_{k+1}(B-T)$ is concave due to Theorem~\ref{thm:concave}, the objective function $g(T)$ is concave. Thus, the maximization over all $T$ gives a unique solution $T_k^{\dagger}$, easily solved using numerical techniques such as a bisection search \cite{boyd}. Also, Theorem~\ref{thm:monotone} helps to reduce the search space by restricting the search to be in one direction for different $B$.
Alternatively, if $g(T)$ is differentiable and available in closed-form, $T_k^{\dagger}$ is given by solving $g'(T)=0$.
Finally, we get the optimal solution for \re{eqn:J_k} by restricting the maximization in \re{eqn:num} to be over $0\leq T\leq B$ to give
\be
T_k^{\star}=
\left\{
\begin{array}{ll}
0, & T_k^{\dagger}\leq 0; \\
B, & T_k^{\dagger}\geq B; \\
T_k^{\dagger}, & 0< T_k^{\dagger}< B.
\end{array}
\right .
\ee
This is because if $T_k^{\dagger}\leq 0,$ the (concave) objective function $g(T)$ must be decreasing for $T\geq 0$; if $T_k^{\dagger}\geq B,$ the objective function must be increasing for $T\leq B$.

\subsection{I.I.D. SNR and Harvested Energy}\label{subsec:iid}

We consider the {\em i.i.d. SI}  scenario where both $\snr_k$ and $H_k$ are independent and identically distributed (i.i.d.) over $k$ for analytical tractability.
% Hence, $\mathbb{E}_{{\snr_k}}[ \mi({\snr_k}, T)] \triangleq \bar{\mi}(T)$ is independent of the past channel state information $\snr_{k-1}$.
Even with i.i.d. SI, the optimization problem in Lemma~\ref{lem:1} is not decoupled as it still depends on the past harvested energy $H_{k-1}$.
Intuitively, this is because the present transmission energy $T_k$ (whose maximum allowable depends on $H_{k-1}$) will still affect the future storage energy $B_{k+1}, B_{k+2},\cdots$.

\newcommand{\snrave}{\bar{\snr}}
\newcommand{\Eone}[1]{\mathrm{E}_1\left(#1 \right)}

If we assume a {\em Rayleigh fading channel} with expected SNR  given by $\snrave$, i.e., the statistics of the SNR is $p_{\snr}(\snr)=1/\snrave \exp(-\snr/\snrave), \snr\geq 0,$ the expected mutual information evaluates as
\be\label{eqn:avemi_rayleigh}
\bar{\mi}( T)
\triangleq
\mathbb{E}_{{\snr}}[ \mi({\snr}, T)]
=\exp\left(\frac{1}{\snrave T}\right)  \Eone{\frac{1}{\snrave T}} \label{eqn:avemi_rayleigh1}
\ee
where the exponential integral is defined as $\Eone{x}=\int_x^{\infty}\exp(-t)/t \,\mathrm{d}t$.
Instead, if we assume an {\em AWGN channel} where the channel is time-invariant with $\snr_k=\snrave$ for all $k$, then the expected mutual information is simply
\be\label{eqn:avemi_awgn}
\bar{\mi}(T)&=& \mi(\snrave, T) =\log(1+\snrave T).
\ee
In AWGN channels, by inspection $J_k(\snr,H,B)$ in Lemma~\ref{lem:1} is independent of $\snr$ for all $k$, but still dependent on $H$. Hence, the optimization problem for  each $J_k(\snr,H,B)$ still has to be solved recursively, rather than as decoupled optimization problems.

% the  battery energy $B_k$ depends on the past transmission energy $T_{k-1}, T_{k-2}, \cdots,$ which in turn depends on the past $H_{k-1}, H_{k-2}, \cdots.$

%\begin{lemma}\label{lem:1}
%%Given $\state_1=(\snr_1, B_1)$, the maximum throughput is given by
%%\be\label{eqn:opt}
%%J_1(\snr_1, B_1)= \max_{\pi\in\Pi} \sum_{k=1}^K \mathbb{E}\left[\mi(\snr_k, T_k)|\snr_k, T_k, \pi\right].
%%\ee
%%where the expectation is performed over all random variables $H^K, \snr^K$ given $\snr_k, B_k$ using policy $\pi$.
%Given $\state_0=(\snr,H,B)$, the maximum throughput
%$\mathcal{T}^{\star}=J_1(\snr,H,B)$ can be computed recursively based on Bellman's equations, starting from $k=K$ until $k=1$ :
%\be\label{eqn:J}
%J_{K}(\snr,H,B) &=& \max_{0\leq T\leq B}
%\bar{\mi}(\snr, T)
%= \bar{\mi}(\snr, B) ,
%%\label{eqn:bellman}
%%\log(1+\snr T) = \log(1+\snr B),
%\IEEEyessubnumber
%\label{eqn:J_K1}\\
%J_k(\snr,H,B) &=& \max_{0\leq T\leq B} \bar{\mi}(\snr, T)
%%\log(1+\snr T)
%+ \bar{J}_{k+1}(\snr, H, B-T)
%\;\;\;\;\;\;
%\IEEEyessubnumber
%\label{eqn:J_k}
%\ee
%for $k=1,\cdots,K-1$,
%where
%\ben\label{eqn:I_expected}
%\bar{\mi}(\snr, T) &=& \mathbb{E}_{\tilde{\snr}}[ \mi(\tilde{\snr}, T)|\snr]  \\
% \bar{J}_{k+1}(\snr, H, x) &=& \mathbb{E}_{\tilde{H}, \tilde{\snr}}\left[ J_{k+1}(\tilde{\snr}, \tilde{H}, \min\{\Bmax, x+\tilde{H}\}) \big|\snr, H \right]
%\label{eqn:J_expected}
%\een

\section{Full Side Information: Arbitrary $\Bmax$}\label{sec:full-arbBmax}

The initial battery energy $B_1$ is always known by the transmitter.
We say that full SI is available if the transmitter also has priori knowledge of the harvest power  $H^{K-1}$ and SNR $\snr^K$ before any transmission begins.
%\footnote{$H_K$ is not needed in our problem, as the energy harvested in slot $K$ affects only the throughput for slot $K+1$ onwards.}.
This corresponds to the ideal case of a predictable environment where the harvest power and channel SNR are both known in advance, and also gives an upper bound to the maximum throughput $\mathcal{T}^{\star}$ for any distribution \re{eqn:distHsnr}.
%Moreover, it provides interesting insights that are useful for constructing practical schemes.

%\subsection{Arbitrary $\Bmax$}
In this section, we consider the general case where $\Bmax$ may be finite.
Corollary~\ref{cor:1}, as a consequence of Lemma~\ref{lem:1}, gives the optimal throughput $\mathcal{T}^{\star}$ for the same problem \re{eqn:prob} but with full SI available.
%All side information are {\it a priori} known and hence the SI is deterministic rather than random variables. Thus, \ref{eqn:distall} still holds with the pdfs replaced by Kronecker delta functions. By Applying \re{lem:1}, we immediately obtain Corollary~\ref{cor:1}.

\begin{corollary}\label{cor:1}
Given full SI $\{H^{K-1}, \snr^K\}$,
the maximum throughput is given by
\be\label{eqn:opt}
J_1(B_1)= \max_{\pi\in\Pi} \sum_{k=1}^K  \mi(\snr_k, T_k),
\ee
which can be computed recursively based on Bellman's equations:
\be\label{eqn:bellman}
&& J_{K}(B) = \max_{0\leq T\leq B}
\mi(\snr_K,  T) = \mi(\snr_K, B),
%\log(1+\snr T) = \log(1+\snr B),
\IEEEyessubnumber
\label{eqn:FSI:J_K1}\\
&& J_k(B) \nonumber \\
%\log(1+\snr T)
%\nonumber \\
&=& \max_{0\leq T\leq B} \mi(\snr_k, T) + J_{k+1}(\min\{\Bmax, B-T+H_{k}\})
\;\;\;\;\;\;\;\;
\IEEEyessubnumber
\label{eqn:FSI:J_k}
\ee
for $k=1,\cdots,K-1.$
%where
%\be\label{eqn:FSI:J_expected}
%\hat{J}_{k+1}(H_k,x) = J_{k+1}(\min\{\Bmax, x+H_{k}\}).
%\ee
\end{corollary}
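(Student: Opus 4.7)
The plan is to derive this corollary as a direct specialization of Lemma~\ref{lem:1} to the case where the joint distribution \re{eqn:distall} degenerates to a deterministic one. Under full SI, the realizations of $H^{K-1}$ and $\snr^K$ are revealed to the transmitter before any transmission begins, so the conditional distributions $p_{\state_k}(\state_k \mid \state_{k-1})$ appearing in \re{eqn:distall} collapse to Dirac deltas concentrated at the revealed values $\hat{\snr}_k$ and $\hat{H}_{k-1}$, while the battery component continues to evolve deterministically according to \re{eqn:battery1}.

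Next, I would substitute this degenerate distribution into the value function \re{eqn:J_expected}. The expectation over $\tilde{\snr}$ and $\tilde{H}$ collapses to a point evaluation at the known $\snr_{k+1}$ and $H_k$, yielding
\[
\bar{J}_{k+1}(\snr_k, H_{k-1}, x) = J_{k+1}\!\left(\snr_{k+1},\, H_k,\, \min\{\Bmax,\, x + H_k\}\right).
\]
Because $\snr_k$ and $H_{k-1}$ are now deterministic constants tied to slot $k$, the function $J_k(\snr_k, H_{k-1}, B)$ depends only on $B$. Redefining $J_k(B) \triangleq J_k(\snr_k, H_{k-1}, B)$ and substituting into \re{eqn:J_k} and \re{eqn:J_K1} would recover the Bellman recursion \re{eqn:FSI:J_K1}--\re{eqn:FSI:J_k} verbatim, with the base case $J_K(B) = \mi(\snr_K, B)$ inherited directly from \re{eqn:J_K1}.

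To conclude that $J_1(B_1)$ equals the throughput maximum in \re{eqn:opt}, I would observe that under full SI the policy class $\Pi$ reduces without loss of optimality to deterministic sequences $\{T_k\}_{k=1}^K$ subject to the feasibility constraints $0 \leq T_k \leq B_k$, since the realized state $\state_k$ conveys no information that is not already available a priori. Consequently the expectation in \re{eqn:T} disappears and the objective becomes the deterministic sum $\sum_{k=1}^K \mi(\snr_k, T_k)$, which is precisely what the recursion maximizes at each stage.

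The main obstacle, while mild, is making this reduction of the policy class rigorous: one has to argue that optimizing over deterministic transmit-energy sequences $\{T_k\}$ (rather than over maps from states to energies) is optimal when full SI is available, and that the running constraints $0 \leq T_k \leq B_k$ together with the deterministic battery update \re{eqn:battery1} coincide with the feasibility set $\Pi$ inherited from Section~\ref{sec:causal}. Beyond that, the proof is essentially a bookkeeping exercise showing that Bellman's principle, already established in Lemma~\ref{lem:1}, applies verbatim once randomness is stripped away.
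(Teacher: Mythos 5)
Your proposal is correct and follows essentially the same route as the paper: the paper's proof likewise specializes Lemma~\ref{lem:1} by replacing the pdfs in \re{eqn:distall} with Dirac delta functions, so that the expectation in \re{eqn:J_expected} collapses and the recursion reduces to \re{eqn:FSI:J_K1}--\re{eqn:FSI:J_k}. Your additional remarks on collapsing the policy class to deterministic sequences simply make explicit what the paper treats as immediate.
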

\begin{proof}
All side information are {\it a priori} known and hence the SI is deterministic rather than random. Corollary~\ref{cor:1} thus follows immediately from Lemma~\ref{lem:1}, by replacing the pdfs in \re{eqn:distall} by Dirac delta functions accordingly.
%The proof follows from \ref{elm:1} by substitution
%Given FSI, the joint pdf \re{eqn:distHsnr} in \ref{elm:1}  consists of products of Dirac-delta functions
%\be
%\ee
\end{proof}

In general, power may be allocated via these modes:
\bi
\item {\em greedy} ({\sf \small G}): use all stored energy whenever available;
\item \emph{conservative} ({\sf \small C}): save as much stored energy as possible (without wasting any harvested energy) to the last slot;
\item \emph{balanced} ({\sf \small B}): stored energy is traded among slots accordingly to channel conditions.
\ei
%If $K=3$, we use the notation {\sf \small XY}, where {\sf \small X}, {\sf \small Y}~$\in$~\{{\sf \small G},~{\sf \small C},~{\sf \small B}\}, to suggest that in the first slot the ${\sf \small X}$ mode is used while in the second slot the {\sf \small Y} mode is used.

For the last slot, or if $K=1$ where there is only one slot, from \re{eqn:FSI:J_K1} it is optimal to allocate all power for transmission.
For the case $K=2$, Corollary~\ref{cor:K=2} obtains the optimal power allocation for the first slot. The proof is given in the Appendix.

\begin{corollary}\label{cor:K=2}
Consider $K=2$ slots. Suppose that the mutual information function is given by \re{eqn:gaussianmi}. Given full SI $\{B_1, H_1, \snr_1, \snr_2\}$, the optimal transmission energy for slot $1$ is given by (corresponding to the {\sf \small G}, {\sf \small B}, {\sf \small C} modes, respectively)
%\be\label{eqn:cor:K=2a}
%T_1^{\star}= B_1
%\ee
%if $H_1> \Bmax$, and is given by
\be\label{eqn:Topt}
T_1^{\star}=
\left\{
\begin{array}{lll}
B_1, & a<0 \mbox{ or } B_1<b ; \\
\widetilde{T}, & a\geq 0 \mbox{ and } -b\leq B_1 \leq c; \\
% |b|\leq B_1 \leq c  ; \\
\hspace{0cm}
[B_1-a]^+, & a\geq 0 \mbox{ and } (B_1>c \mbox{ or } B_1<-b);
\end{array}
\right .
%\\
%\;\hspace{-1cm}\label{eqn:cor:K=2}
\ee
%if $ H_1\leq\Bmax$,
where we denote $[x]^+\triangleq \max(0,x)$ and
\be\label{eqn:tildeT}
\widetilde{T}&=&  B_1/2 + (1/\snr_2-1/\snr_1+H_1)/2,
\ee
and we also let $a= \Bmax-H_1$,  $b= H_1+1/\snr_2-1/\snr_1$, and $c= 2 \Bmax-H_1+1/\snr_2-1/\snr_1.$
%and $b\triangleq H_1+(1/\snr_2-1/\snr_1), c\triangleq 2 \Bmax-H_1+(1/\snr_2-1/\snr_1).$
\end{corollary}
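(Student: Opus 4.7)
The plan is to eliminate $T_2$ using Corollary~\ref{cor:1}, reduce the optimization to a one-dimensional problem in $T_1$, and then partition the feasible interval $[0,B_1]$ according to whether the battery cap $\Bmax$ is binding in the second slot. Applying \re{eqn:FSI:J_K1} to the last slot gives $T_2^{\star} = B_2 = \min\{B_1 - T_1 + H_1,\ \Bmax\}$, so \re{eqn:opt} with the Gaussian formula \re{eqn:gaussianmi} collapses to the scalar problem of maximizing
\[
F(T_1) \;=\; \log_2(1+\snr_1 T_1) \;+\; \log_2\bigl(1+\snr_2 \min\{B_1 - T_1 + H_1,\ \Bmax\}\bigr)
\]
over $T_1 \in [0, B_1]$.

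Next I would split $[0, B_1]$ at the cap-threshold $T_1^{\mathrm{th}} = [B_1 - a]^+$, i.e.\ the smallest $T_1$ for which storing the leftover energy does not overflow the battery. On the uncapped piece $T_1 \ge T_1^{\mathrm{th}}$, $F$ is a sum of two strictly concave logs and the first-order water-filling condition $\snr_1/(1+\snr_1 T_1) = \snr_2/(1+\snr_2(B_1 + H_1 - T_1))$ has the unique unconstrained solution $\widetilde{T}$ of \re{eqn:tildeT}; on the capped piece $T_1 < T_1^{\mathrm{th}}$, the second log is the constant $\log_2(1+\snr_2\Bmax)$, so $F$ is strictly increasing. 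Continuity of $F$ at the junction $T_1^{\mathrm{th}}$ then reduces the full problem to projecting $\widetilde{T}$ onto $[T_1^{\mathrm{th}}, B_1]$, unless the uncapped piece is empty (exactly when $a < 0$), in which case $F$ is strictly increasing on the whole feasible set and $T_1^{\star} = B_1$ (greedy).

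For $a \ge 0$ I would identify the three modes in \re{eqn:Topt} by tracking where $\widetilde{T} = (B_1 + b)/2$ sits relative to $[T_1^{\mathrm{th}}, B_1]$, using $c = 2a + b$. The inequality $\widetilde{T} \ge B_1$ reduces to $B_1 \le b$ and yields the greedy outcome $T_1^{\star} = B_1$. The compound inequality $T_1^{\mathrm{th}} \le \widetilde{T} \le B_1$ reduces, after separating the subcases $B_1 \le a$ (where $T_1^{\mathrm{th}} = 0$ and $\widetilde{T} \ge 0$ gives $B_1 \ge -b$) and $B_1 > a$ (where $T_1^{\mathrm{th}} = B_1 - a$ and $\widetilde{T} \ge B_1 - a$ gives $B_1 \le c$), to the balanced region $-b \le B_1 \le c$ with $T_1^{\star} = \widetilde{T}$. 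The remaining region $B_1 > c$ or $B_1 < -b$ forces $\widetilde{T} < T_1^{\mathrm{th}}$, so by strict monotonicity below $T_1^{\mathrm{th}}$ and strict decrease above it, the maximum of $F$ is attained at the junction $T_1 = T_1^{\mathrm{th}} = [B_1 - a]^+$, the conservative outcome.

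The main obstacle I anticipate is the bookkeeping at these boundaries, specifically verifying that the single formula $[B_1 - a]^+$ captures both conservative subcases without further case distinctions: the one with an interior junction $B_1 - a > 0$ (triggered by $B_1 > c$ when $a + b > 0$, giving $\widetilde{T} < B_1 - a$) and the one with a collapsed junction $T_1^{\mathrm{th}} = 0$ (triggered by $B_1 < -b$, or by $B_1 > c$ when $a + b < 0$ forces $c < a$, in which case $\widetilde{T} \le 0$). Once this is handled, strict concavity of $F$ on the uncapped piece and strict monotonicity on the capped piece preclude any other local maximum, closing the argument.
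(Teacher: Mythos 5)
Your proposal is correct and follows essentially the same route as the paper's proof: reduce to the scalar problem $\max_{0\le T\le B_1} g(T)$, note that on the capped region $T< [B_1-a]^+$ the second term is the constant $\log_2(1+\snr_2\Bmax)$ so the objective is increasing (hence the optimum satisfies $T_1^{\star}\ge [B_1-a]^+$, with the $a<0$ case giving $T_1^{\star}=B_1$ outright), and then project the unconstrained stationary point $\widetilde{T}$ of the concave uncapped objective onto $[[B_1-a]^+,B_1]$ before translating the conditions into $B_1$ versus $b$, $-b$, $c$. Your bookkeeping of the boundary subcases (including $a+b<0$, where the junction collapses to $0$) is consistent with the paper's final "re-writing in terms of $B_1$" step, so no gap.
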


In Corollary~\ref{cor:K=2}, the power allocation  \eqref{eqn:Topt} is interpreted to be in {\sf \small G}, {\sf \small B}, or {\sf \small C} mode, respectively.
As an example, suppose $b>0$. Then all modes can be active: power allocation is greedy if the energy to be harvested is large or the stored energy is small ($a<0$ or $B_1<b$); power allocation is conservative if the energy to be harvested is small \emph{and} the stored energy is large ($a\geq 0$ and $B_1>c$); otherwise, the allocation depends on the SI.

%Moreover, as intuitively expected, a greedy mode is optimal if the initial battery energy is small enough, while a conservative mode is optimal if the initial battery energy is large enough.

\begin{remark}
From Corollary~\ref{cor:K=2}, $T_1^{\star}(B_1)$ is a piece-wise linear function of $B_1$. We also see that $T_1^{\star}(B_1)$ is increasing in $B_1$, as stated in Theorem~\ref{thm:monotone} for the general case.
\end{remark}

\begin{remark}\label{rem:powerhalf}
If $\Bmax\rightarrow\infty$, we get $a\geq 0$ and $c\rightarrow\infty$ in Corollary~\ref{cor:K=2}. Then \re{eqn:Topt} simplifies to $T_1^{\star}=[\widetilde{T}]^+$.
From \re{eqn:tildeT}, the optimal power allocation is thus given by half of the battery energy $B_1/2$ plus (or minus) a correction term that depends on the SNRs and harvested energy; this observation will be exploited to obtain a heuristic scheme in Section~\ref{sec:heu}.
\end{remark}

%\begin{remark}
%\hl{To check}
%From Corollary~\ref{cor:K=2}, we can recover the water-filling solution \cite{Goldsmith97IT} if we set $H_1=0$, for the case of $K=2$.
%\end{remark}

Although we can derive a closed-form result for larger $K$, the expression becomes unwieldy and less intuitive. To make progress, in the next section we assume the case of infinite $\Bmax$, which gives the highest possible achievable throughput and thus provides an upper bound for any practical implementation. The assumption is also reasonable if the storage buffer is selected to be large enough.
We shall show that for any $K$ we can obtain a closed-form result that is a variation of the water-filling power allocation policy \cite{Goldsmith97IT}, which is somewhat suggested by Remark~\ref{rem:powerhalf}.

%\newpage

\section{Full Side Information: Infinite  $\Bmax$}\label{sec:full-infBmax}

The previous section considers the general case of arbitrary $\Bmax$.
To develop more insights, in this section we consider that the mutual information function is given by \re{eqn:gaussianmi} and $\Bmax\rightarrow\infty$. Then from  \re{eqn:battery1}, the battery stored at slot $k+1$, where $k\in\mathcal{K},$ is given by
\be\label{eqn:Binf}
B_{k+1} =B_1-\sum_{i=1}^k T_i +\sum_{i=1}^k H_i.
\ee
A non-negative power allocation is feasible if and only if $B_{k+1} \geq 0, k\in\mathcal{K}$.
The throughput maximization problem solved in Corollary~\ref{cor:1} can then be formulated as follows:
\be\label{eqn:Binf:prob}\IEEEyessubnumber
\mathcal{T}^{\star}  =\max_{\{T_k\geq 0, k\in\mathcal{K}\}} %\mathcal{T}(\pi) &=&
&&
%\mathcal{T}(T^K) =
\sum_{k=1}^K \mi(\snr_k, T_k) \\
\IEEEyessubnumber\label{eqn:Binf:cons}
\mbox{ subject to }&& \sum_{i=1}^k T_i - B_1 -\sum_{i=1}^{k-1} H_i \leq 0, \;\;  k\in\mathcal{K}.
\IEEEeqnarraynumspace
\ee
%
%\be\label{eqn:Binf:prob}\IEEEyessubnumber
%\max_{\{T_k\geq 0, k\in\mathcal{K}\}} %\mathcal{T}(\pi) &=&
%\mathcal{T}(T^K) = \sum_{k=1}^K \mi(\snr_k, T_k)
%\ee
%subject to $T_k\leq B_k, k\in\mathcal{K}$, or equivalently, subject to
%\be
%\setcounter{IEEEsubequation}{2}
%\IEEEyessubnumber\addtocounter{equation}{-1}\addtocounter{IEEEsubequation}{1}\TexifyPS.edt
%\label{eqn:Binf:cons}
%\sum_{i=1}^k T_i &\leq& B_1 +\sum_{i=1}^{k-1} H_i, \;\;\;\; k\in\mathcal{K}
%\ee

\newcommand{\ToptWF}{\mathcal{T}^{\star}_{\mathsf{WF}}}

\subsection{Water-Filling Algorithm}

Before we consider the general case where the constraint \re{eqn:Binf:cons} is imposed for all $k\in\mathcal{K}$, we impose the constraint \re{eqn:Binf:cons} only for the last slot, i.e., only for $k=K$.
This then corresponds to the conventional problem of maximizing the sum throughput with a {\em sum} energy constraint of $P_{\max}= B_1 + \sum_{i=1}^{K-1} H_i$:
\be\label{eqn:Binf:prob:wf}\IEEEyessubnumber\label{eqn:Binf:prob:wf:obj}
\ToptWF(\snr^K, P_{\max})=
\max_{\{T_k\geq 0, k\in\mathcal{K}\}} %\mathcal{T}(\pi) &=&
&&
%\mathcal{T}(T^K) %=
\sum_{k=1}^K \mi(\snr_k, T_k)
\\
\IEEEyessubnumber\label{eqn:Binf:cons:wf}
\mbox{ subject to }&& \sum_{i=1}^K T_i \leq P_{\max}.
\ee
Since less constraints are imposed, the maximum throughput in \re{eqn:Binf:prob:wf} is no smaller than that of \re{eqn:Binf:prob}.
%In the context where energy is harvested over the slots, this means that all the energy harvested in slot $k\in\mathcal{K}$ are assumed to be available before slot $1$ is used for transmission.
It is well known that the optimal solution for \re{eqn:Binf:prob:wf} is given by (see e.g. \cite{Cover06,boyd})
\be\label{eqn:optP:wf}
{T}_{\mathsf{WF},k}^{\star}=\left[\nu-\frac{1}{\snr_k}\right]^+.
\ee
This optimal solution is implemented by the {\em water-filling algorithm}, where the {\em water-level} (WL) $\nu\geq 0$ is chosen such that \re{eqn:Binf:cons:wf} holds with equality by using the optimal power allocation in \re{eqn:optP:wf}.
For completeness, an implementation of the {\em water-filling algorithm}, which gives the maximum $\mathcal{T}$ to within a tolerance of $\epsilon$, is given below as Algorithm~\ref{alg:wf}.

%\renewcommand{\baselinestretch}{1}
%\IncMargin{1em}
\begin{algorithm}
\SetKwInOut{Input}{input}\SetKwInOut{Output}{output}
\Input{slot size $K$, SNRs $\{\snr_k\}$, power constraint $P_{\max}$, tolerance $\epsilon$ (close to zero)}
\Output{optimal power allocation $\{{T}_{\mathsf{WF},k}^{\star}\}$, optimal WL $\lambda^{\star}$}
\BlankLine
\tcp{initialization}
$P:=0$, $\lambda^{\mathrm{lo}}:=0, \lambda^{\mathrm{hi}}:=\infty$ (a large number) \;
$T^{\theta}_k:=[\lambda^{\theta}-1/\snr_k]^+, {\theta}\in\{\mathrm{lo},\mathrm{hi}\}, k\in\mathcal{K}$ \;
%$P^{\theta}:= \sum_{k=1}^K T^{\theta}_k, {\theta}\in\{\mathrm{lo},\mathrm{hi}\}$ \;
%$\mathcal{T}^{\theta}:= \sum_{k=1}^K \mi(\snr_k, T^{\theta}_k), {\theta}\in\{\mathrm{lo},\mathrm{hi}\}$ \;
\BlankLine
\tcp{loop until sum power $P$ less than $P_{\max}$ to within tolerance $\epsilon$}
\While{$|P_{\max}-P|>\epsilon$ or $P>P_{\max}$}
{
\tcp{improve $P$ to be closer to $P_{\max}$}
$\lambda :=(\lambda^{\mathrm{lo}}+\lambda^{\mathrm{hi}})/2$ \;
$T_k:=[\lambda-1/\snr_k]^+, k\in\mathcal{K}$ \;
$P := \sum_{k=1}^K T_k$\;
%$\mathcal{T}:= \sum_{k=1}^K \mi(\snr_k, T_k)$ \;
\tcp{update $\lambda_{\mathrm{lo}}$ or $\lambda_{\mathrm{hi}}$}
%\tcp{update $(\lambda_{\mathrm{lo}}, P_{\mathrm{lo}})$ or $(\lambda_{\mathrm{hi}}, P_{\mathrm{hi}})$}
\eIf{$P>P_{\mathrm{max}}$}
{
%\tcp{update $P_{\mathrm{hi}}$ as $P$}
$\lambda^{\mathrm{hi}}:=\lambda$\;
%$P^{\mathrm{hi}} := P$\;
}{
%\tcp{update $P_{\mathrm{lo}}$ as $P$}
$\lambda^{\mathrm{lo}}=\lambda$\;
%$P^{\mathrm{lo}} := P$\;
}
}
${T}_{\mathsf{WF},k}^{\star}:=T_k$\;
$\lambda^{\star}:=\lambda$\;
\caption{Conventional water-filling algorithm. This implementation achieves optimality to a tolerance of $\epsilon$.}\label{alg:wf}
\end{algorithm}
%\DecMargin{1em}
\renewcommand{\baselinestretch}{\mystretch}

\subsection{Staircase Water-Filling Algorithm}

We now proceed to solve our original problem \re{eqn:Binf:prob} with additional energy harvesting constraints in \re{eqn:Binf:cons}. It turns out that the conventional water-filling algorithm is no longer optimal. Instead it is necessary to use a generalized type of water-filling where the water level is a staircase-like function.

\subsubsection{Structural Properties}
The optimization problem in \re{eqn:Binf:prob} is convex and so can be solved by the dual problem \cite{boyd}.
The Lagrangian associated to the primal problem \re{eqn:Binf:prob} is
\ben
\mathcal{L}(\lambda^K, T^K)=
\mathcal{T}(T^K) - \sum_{k=1}^K \lambda_k  \cdot\left(\sum_{i=1}^k T_i - B_1- \sum_{i=1}^{k-1} H_i \right)
%+ \sum_{k=1}^K\mu_k T_k
\een
% \nonumber \\
%&=& \mu_K B_1 + \sum_{k=1}^K \mi(\snr_k, T_k) - \mu_k \cdot(T_k - H_k) \\
%&\triangleq & \mathcal{L}'(\{\mu_k\}, \{T_k\})
%\een
where  $T_k\geq 0$ is the power allocation for the $k$th slot and $\lambda_k\geq 0$ is the Lagrangian multiplier for the $k$th constraint in \re{eqn:Binf:cons}, $k\in\mathcal{K}$.
 %and  $\mu_k$ is the Lagrangian multiplier for the constraint $T_k\geq 0$ with $\lambda_k,\mu_k\geq 0$.
%where $\mu_k\triangleq \sum_{i=1}^k \lambda_i, k\in\mathcal{K}$ are used to transform the original Lagrange multipliers $\{\lambda_k\}$ to $\{\mu_k\}$ (and vice versa).
Then the necessary and sufficient conditions for $\lambda^K$ and $T^K$ to be both primal and dual optimal are given by the Karush-Kuhn-Tucker (KKT) optimality conditions:
 %(we denote the optimal solutions with a superscript $^{\star}$):
\be\label{eqn:KKT}
\IEEEyessubnumber\label{eqn:KKT:primalfeasible1}
\sum_{i=1}^k T_i -B_1 -\sum_{i=1}^{k-1} H_i & \leq& 0,
\\
\IEEEyessubnumber\label{eqn:KKT:primalfeasible2}
T_k &\geq & 0,
\\
\IEEEyessubnumber\label{eqn:KKT:dualfeasible}
\lambda_k &\geq& 0  ,
\\
\IEEEyessubnumber\label{eqn:KKT:CSC}
\lambda_k \left(\sum_{i=1}^k T_i  - B_1 -\sum_{i=1}^{k-1} H_i \right) &=&0,
\\
\IEEEyessubnumber\label{eqn:KKT:stationary}
\frac{\partial \mathcal{L}(\lambda^{K},  T^{K})} {\partial T_k}%\right|_{\lambda_k=\lambda_k,  T_k=T_k, k\in\mathcal{K}}
&=& 0 ,
\ee
for $k\in\mathcal{K}$.
%\be\label{eqn:KKT}
%\IEEEyessubnumber\label{eqn:KKT:primalfeasible1}
%\sum_{i=1}^k T^{\star}_i -B_1 -\sum_{i=1}^{k-1} H_i & \leq& 0,
%\\
%\IEEEyessubnumber\label{eqn:KKT:primalfeasible2}
%T^{\star}_k &\geq & 0,
%\\
%\IEEEyessubnumber\label{eqn:KKT:dualfeasible}
%\lambda^{\star}_k &\geq& 0  ,
%\\
%\IEEEyessubnumber\label{eqn:KKT:CSC}
%\lambda^{\star}_k \left(\sum_{i=1}^k T^{\star}_i  - B_1 -\sum_{i=1}^{k-1} H_i \right) &=&0,
%\\
%\IEEEyessubnumber\label{eqn:KKT:stationary}
%\left. \frac{\partial \mathcal{L}(\lambda^{K},  T^{K})} {\partial T_k}\right|_{\lambda_k=\lambda^{\star}_k,  T_k=T^{\star}_k, k\in\mathcal{K}} &=& 0 ,
%\ee
%for $k\in\mathcal{K}$.
%
From \re{eqn:KKT:primalfeasible2}, and imposing the constraints \re{eqn:KKT:dualfeasible} and \re{eqn:KKT:stationary} via similar arguments to obtain \re{eqn:optP:wf} in  \cite{Cover06,boyd}, we obtain the optimal power allocation as
\be\label{eqn:optP:gwf}
T^{\star}_k=\left[\nu_k-\frac{1}{\snr_k}\right]^+
\ee
for $k\in\mathcal{K}$, where $\nu_k\triangleq \left(\ln 2\sum_{i=k}^K \lambda_i \right)^{-1}\geq 0 $, and the $\lambda_i$'s satisfy the KKT conditions \eqref{eqn:KKT}.

Analogous to the problem in \re{eqn:Binf:prob:wf} with only power constraint \re{eqn:Binf:cons:wf}, we say $\nu_k$ is the WL for slot $k$.
Also, we say slot $t\in\mathcal{K}$ is a  {\em transition slot} (TS) if the water level changes {\em after} slot $t$, i.e.,  $\nu_{t}\neq \nu_{t+1}$.
We define the last slot $k=K$ also as a TS (say by defining $\nu_{K+1}$ to be infinity); hence there is at least one TS.
We collect all TSs as the set $\setT=\{t_1, t_2,\cdots, t_{|\setT|}\}$, where $t_i<t_j$ for $i<j$ and $t_{|\setT|}=K$.
%Assuming  there are $|\setT|< K$ TSs, we label the $i$th TS as $t_i, 1\leq i\leq |\setT|$.

\begin{figure}%[f]
\centering
%\psfrag{x}{ slot $k$}
%\psfrag{y}{}
%\psfrag{w}{$T_k$}
%\psfrag{s}{$\snr_k^{-1}$}
%\psfrag{v}{$\nu_k$}
%\psfrag{t}{$t_1$}
%\psfrag{u}{$t_2$}
\includegraphics[scale=1.25]{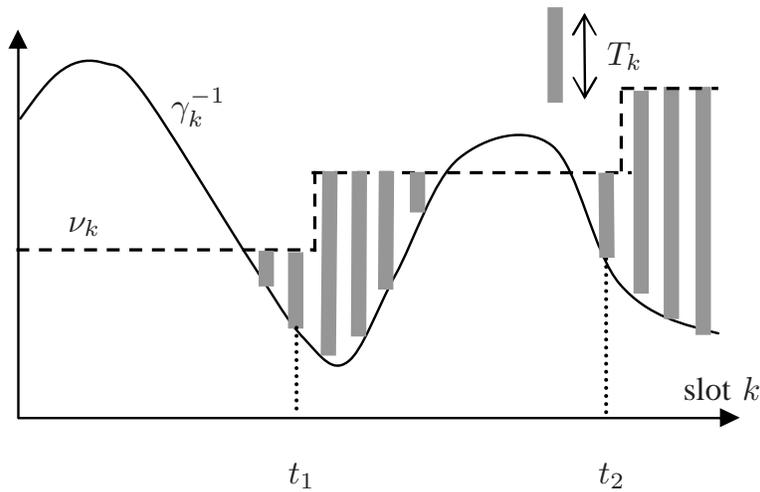}
\caption{Structure of optimal power allocation $T^{\star}_k$ with full SI and infinite $\Bmax$. We assume two optimal TSs and hence three distinct water levels for $\nu_k$. Here, the SNRs $\{\snr_k\}$ are arbitrary.}
\label{fig:wf}
\end{figure}

From the result in \re{eqn:optP:gwf}, we obtain the following structural properties for the optimal power allocation in Theorem~\ref{lem:L}.
%Fig.~\ref{fig:wf} gives an example of the optimal power allocation. It is possible to have multiple distinct water levels for $\nu_k$, but the water levels are non-increasing. In conventional water-filling, however, there is only one distinct water level.
Fig.~\ref{fig:wf} gives an example of the optimal power allocation.
In general the optimal WLs $\{\nu_k\}$ depend on the slot indices and there can be multiple optimal TSs, while in the conventional water-filling algorithm, the optimal WL is the same for all slot indices and thus there is no TS (except for the trivial one at slot $K$).

\begin{theorem}\label{lem:L}
%Suppose the mutual information function is given by \re{eqn:gaussianmi} and $\Bmax\rightarrow\infty$. Given full SI,
The optimal power allocation in \re{eqn:optP:gwf} satisfy these properties:
\bn
\item[$\mathsf{P}1$:] The WL is non-decreasing over slots, i.e.,  $\nu_1 \leq \cdots \leq \nu_K$.
We say that the optimal power allocation performs {\em staircase water-filling} over slots, since the WL is a staircase-like function (see e.g. Fig.~\ref{fig:wf}).
\item[$\mathsf{P}2$:] If slot $t$ is a TS, then the battery storage is empty, i.e., \re{eqn:KKT:primalfeasible1} holds with equality if $k\in \setT$.
\en
\end{theorem}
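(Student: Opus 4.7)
The plan is to derive both $\mathsf{P}1$ and $\mathsf{P}2$ directly from the KKT system \re{eqn:KKT}, together with the explicit formula $\nu_k = \bigl(\ln 2 \cdot \sum_{i=k}^K \lambda_i\bigr)^{-1}$ appearing just after \re{eqn:optP:gwf}. No further optimization machinery is required, since both claims are really statements about the structure of the optimal Lagrange multipliers $\{\lambda_k\}$ at the solution already characterized by \re{eqn:optP:gwf}.

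For $\mathsf{P}1$, I split $\sum_{i=k}^K \lambda_i = \lambda_k + \sum_{i=k+1}^K \lambda_i$ and invoke dual feasibility \re{eqn:KKT:dualfeasible} to obtain $\lambda_k \geq 0$. Hence the partial sums are non-increasing in $k$, and inverting them (with the convention $\nu_k = +\infty$ when the partial sum vanishes) immediately yields the staircase monotonicity $\nu_1 \leq \nu_2 \leq \cdots \leq \nu_K$.

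For $\mathsf{P}2$, I first treat an internal TS $t < K$. By definition $\nu_t \neq \nu_{t+1}$, and combined with $\mathsf{P}1$ this forces $\nu_t < \nu_{t+1}$. Going back to the formula for $\nu_k$, this strict inequality translates to $\sum_{i=t}^K \lambda_i > \sum_{i=t+1}^K \lambda_i$, i.e., $\lambda_t > 0$. Substituting $\lambda_t > 0$ into complementary slackness \re{eqn:KKT:CSC} then forces $\sum_{i=1}^t T_i - B_1 - \sum_{i=1}^{t-1} H_i = 0$, which is the claimed active form of \re{eqn:KKT:primalfeasible1}. For the trivial TS at $t = K$, I argue separately: since $\mi(\snr_K, \cdot)$ is strictly increasing, any feasible allocation that leaves the $k{=}K$ constraint slack can be strictly improved by raising $T_K$, contradicting optimality; hence the $k{=}K$ constraint is active at the optimum as well.

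I do not anticipate any substantive obstacle. The only bookkeeping item is the degenerate case $\sum_{i=k}^K \lambda_i = 0$, which formally sends $\nu_k$ to $+\infty$; this is ruled out at the optimum of the bounded-throughput problem, so the convention adopted above absorbs it without affecting either argument.
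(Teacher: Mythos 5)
Your proposal is correct and follows essentially the same route as the paper: $\mathsf{P}1$ from dual feasibility $\lambda_k\geq 0$ applied to the partial sums defining $\nu_k$, and $\mathsf{P}2$ by noting that $\nu_t\neq\nu_{t+1}$ forces $\lambda_t>0$ and then invoking complementary slackness \re{eqn:KKT:CSC}. Your separate treatment of the boundary TS $t=K$ (via strict monotonicity of $\mi(\snr_K,\cdot)$) is a small extra care point the paper absorbs into its convention $\nu_{K+1}=\infty$, but it does not change the argument.
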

\begin{proof}
Since $\lambda_k\geq 0$, it follows that $\nu_k\geq 0$ and also that $\nu_k$ is non-decreasing with $k$.
This proves property $\mathsf{P}1$.

Suppose slot $t$ is an TS, i.e.,  $t\in \setT$ and so $\nu_{t}\neq \nu_{t+1}$.
Since by definition $\nu_k=(\ln 2\sum_{i=k}^K \lambda_i)^{-1}$, we get $\lambda_{t}\neq 0$. From \re{eqn:KKT:dualfeasible}, we get $\lambda_{t}>0$.  It then follows from the complementary slackness condition \re{eqn:KKT:CSC} (with $k$ replaced by $t$) that \re{eqn:KKT:primalfeasible1} holds with equality for $k=t$. This proves property $\mathsf{P}2$.
\end{proof}

From Theorem~\ref{lem:L}, we have the following additional structural properties in Corollary~\ref{rem:3} and Corollary~\ref{rem:piecewiseWF}.

\begin{corollary}\label{rem:3}
If the SNR is non-decreasing over slots, then the optimal power allocation is non-decreasing over slots.
\end{corollary}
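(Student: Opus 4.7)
The plan is to simply combine the closed-form expression for $T_k^\star$ in \eqref{eqn:optP:gwf} with property $\mathsf{P}1$ of Theorem~\ref{lem:L}, and then invoke monotonicity of the operations involved.

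Concretely, I would proceed as follows. First, recall from \eqref{eqn:optP:gwf} that the optimal power allocation has the form $T_k^\star = [\nu_k - 1/\snr_k]^+$. By property $\mathsf{P}1$ in Theorem~\ref{lem:L}, the sequence of water levels satisfies $\nu_1 \leq \nu_2 \leq \cdots \leq \nu_K$. Under the corollary's hypothesis that $\snr_k$ is non-decreasing in $k$, the reciprocal $1/\snr_k$ is non-increasing in $k$, so $-1/\snr_k$ is non-decreasing in $k$.

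Next, I would observe that the sum of two non-decreasing sequences is non-decreasing, hence $\nu_k - 1/\snr_k$ is non-decreasing in $k$. Finally, since $[\,\cdot\,]^+ = \max(0,\cdot)$ is a non-decreasing function of its argument, applying it termwise preserves monotonicity, giving $T_k^\star \leq T_{k+1}^\star$ for all $k \in \{1,\ldots,K-1\}$.

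There is essentially no obstacle here: once the staircase structure of the water levels has been established in Theorem~\ref{lem:L}, the corollary follows from the monotonicity of the composition. The only subtlety worth noting explicitly in the write-up is that positivity of $T_k^\star$ in one slot is not required for positivity in later slots, since $[\,\cdot\,]^+$ collapses negative arguments to zero while still preserving the ordering across slots.
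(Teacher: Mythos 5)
Your proposal is correct and follows essentially the same route as the paper: the paper's proof also combines the closed-form allocation \re{eqn:optP:gwf} with property $\mathsf{P}1$ of Theorem~\ref{lem:L} to conclude $T_l^{\star}\leq T_k^{\star}$ whenever $\snr_l\leq\snr_k$ for $l<k$. Your write-up merely makes the intermediate monotonicity steps (non-increasing $1/\snr_k$, monotonicity of $[\,\cdot\,]^+$) explicit, which is fine.
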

\begin{proof}
This follows immediately from \re{eqn:optP:gwf} and property $\mathsf{P}1$, which implies that $T_l\leq T_k$ if $\snr_l\leq \snr_k$ for $l<k$.
\end{proof}

\begin{figure}%[f]
\centering
%\psfrag{x}{slot $k$}
%\psfrag{y}{}
%\psfrag{w}{$T_k$}
%\psfrag{s}{$\snr_k^{-1}$}
%\psfrag{v}{$\nu_k$}
\includegraphics[scale=1.25]{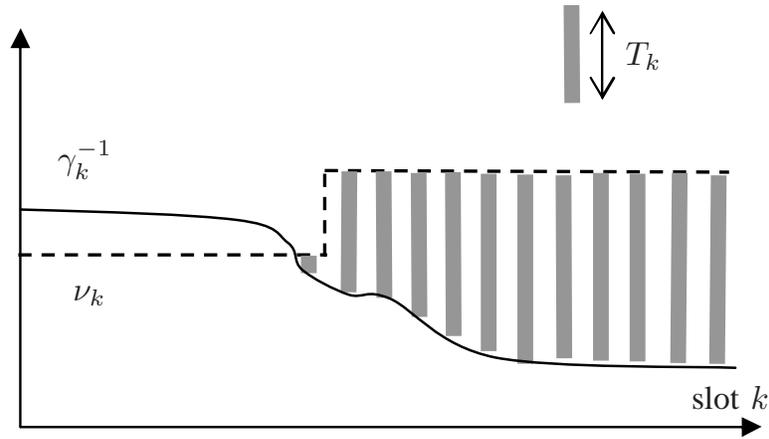}
\caption{Structure of optimal power allocation $T_k$ with full SI and infinite $\Bmax$, with increasing SNR $\snr_k$, i.e, decreasing $\snr_k^{-1}$, over slot $k$. In this case, $T^{\star}_k$ must increase over slot $k$.}
\label{fig:wf_incsnr}
\end{figure}

An example that illustrates Corollary~\ref{rem:3} is given in Fig.~\ref{fig:wf_incsnr}, where we see that the inverse of the SNR is non-increasing.
%Since the WL is non-decreasing from  $\mathsf{P}1$, clearly the optimal power allocation must be non-decreasing.
It is easy to see that the converse of Corollary~\ref{rem:3} is not true in general.
That is, if the SNR is non-increasing, then the optimal power allocation may not be non-increasing over slots (in particular for the slot immediately after the TS).
In conventional water-filling, however, both Corollary~\ref{rem:3} and its converse hold, i.e.,  the optimal power allocation is non-decreasing over slots {\em if and only if} the SNR is non-decreasing over slots.

We give an intuitive understanding of Corollary~\ref{rem:3}, and why the converse does not hold, to shed some light on how the energy harvesting constraints lead to a different optimal power allocation. First, let us consider the AWGN channel where the SNR is constant over slots.  If all the harvested energy is already available in the first slot, i.e., there is only a single sum-power constraint, a uniform power allocation is optimal for the AWGN channel. However, in an energy harvesting system, maintaining a uniform power allocation may not be always possible due to the causal arrival of the harvested energy.
% is always possible to accumulate energy so as to have more energy for transmission in the latter slots, but it may not be always possible to have more energy for transmission in the earlier slots.
Due to this non-uniform availability of harvested energy over slots, more energy only becomes available for transmission in the latter slots. Intuitively, we also expect more energy to be allocated for transmission in the latter slots such that the energy harvesting constraints in \eqref{eqn:Binf:cons} are satisfied.
This type of strategy is optimal from Corollary~\ref{rem:3}, which applies since the SNR is constant and hence also non-decreasing.
Next, consider the case where the SNR is non-decreasing over slots. From the water-filling algorithm under a single sum-power constraint, to achieve the maximum throughput it is optimal to allocate more power to the latter slots that have higher SNRs.  This is consistent with the earlier observation that more power should be allocated to the latter slots such that the constraints in \eqref{eqn:Binf:cons} are satisfied. Hence, it is also optimal to allocate more power to the latter slots.
In general if the SNR is arbitrary, however, the high-SNR slots may not correspond to the latter slots; hence intuitively the converse of Corollary~\ref{rem:3} may not hold in general.

%Finally, consider that the SNR is decreasing (or even arbitrary). Although the water-filling algorithm suggests to allocate more power to the earlier slots with higher SNR, this power allocation may not be always feasible due to the energy harvesting constraint. Hence in general one cannot conclude if the optimal power allocation is decreasing or increasing over slots.

\newcommand{\setSI}{\tilde{\setT}}

\subsubsection{Efficient Implementation}
Based on the structural properties, we now develop an efficient algorithm to implement the staircase water-filling.
%, the optimal power allocation has  the following interpretation in Corollary~\ref{rem:piecewiseWF}.

Some definitions are in order.
For convenience, let $t_0=0$.
We refer to the $i$th {\em slot interval}, where $i=1,\cdots,|\setT|$, as the slots between the $i$th and $(i+1)$th TS, specifically in the TS set $\setSI_i \triangleq \{t_{i-1}+1,\cdots, t_{i}\}$.
Thus, $\bigcup_i \setSI_i=\mathcal{K}$ and $\setSI_i \cap \setSI_j=\emptyset$ for $i\neq j$.
%Let the sum harvested energy that is available in the $i$th slot interval be $P(i) \triangleq  \sum_{k\in \setSI_i } H_{k-1}$, where we denote $H_0=B_1$ for notational simplicity.
The optimal set of TSs corresponding to an optimal power allocation is denoted as $\setT^{\star}=\{t^{\star}_1, t^{\star}_2, \cdots, t^{\star}_{|\setT^{\star}|}\}$

\begin{corollary}\label{rem:piecewiseWF}
The optimal power allocation performs staircase water-filling as follows:
 for every $i$th slot interval,  where $i=1,\cdots,|\setT^{\star}|$, conventional water-filling is performed  subject to the sum power constraint of
 $P(i) \triangleq  \sum_{k\in \setSI_i } H_{k-1}$, where we denote $H_0=B_1$ for notational simplicity.
 % and some constant WL, say  $\mu(i)$, where $\mu(i)$ in strictly increasing in $i$.
% Consider the $i$th interval  comprising of slot in after the $i$th TS until the $(i+1)$th TS where the optimal WL stays constant, where $0\leq i\leq |\setT|$. That is, $\mu_{t_i+1}=\mu_{t_i+2}=\cdots=\mu_{t_{i+1}}$.
\end{corollary}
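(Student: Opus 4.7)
The plan is to derive Corollary~\ref{rem:piecewiseWF} as a direct consequence of the two structural properties $\mathsf{P}1$ and $\mathsf{P}2$ proved in Theorem~\ref{lem:L}, together with the general form of the optimal allocation \eqref{eqn:optP:gwf}. Fix an index $i\in\{1,\ldots,|\setT^{\star}|\}$ and consider the $i$th slot interval $\setSI_i=\{t^{\star}_{i-1}+1,\ldots,t^{\star}_i\}$. By definition of a TS, there is no transition strictly inside this interval, so by $\mathsf{P}1$ the water level $\nu_k$ is constant over $k\in\setSI_i$; call this common value $\nu^{(i)}$. Substituting into \eqref{eqn:optP:gwf} shows that within $\setSI_i$ the allocation $T^{\star}_k=[\nu^{(i)}-1/\snr_k]^+$ is exactly of the conventional water-filling form with a single water level $\nu^{(i)}$.

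Next I would identify the sum-power budget that pins down $\nu^{(i)}$. Here I invoke $\mathsf{P}2$: since both $t^{\star}_{i-1}$ and $t^{\star}_i$ are TSs (with the convention $t^{\star}_0=0$ interpreted as an ``empty-battery'' starting point, which holds trivially since $\sum_{j=1}^{0}T_j-B_1-\sum_{j=1}^{-1}H_j=-B_1$ and we use the initial stored energy $B_1$), the constraint \eqref{eqn:KKT:primalfeasible1} is tight at both endpoints. Subtracting the two tight constraints gives
\begin{equation*}
\sum_{k\in\setSI_i} T^{\star}_k \;=\; \sum_{j=1}^{t^{\star}_i-1} H_j \;-\; \sum_{j=1}^{t^{\star}_{i-1}-1} H_j \;+\; \bigl(\text{initial term when } i=1\bigr)\;=\; \sum_{k\in\setSI_i} H_{k-1},
\end{equation*}
where the last equality uses the convention $H_0\triangleq B_1$ to absorb the initial-battery term cleanly for $i=1$. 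This is precisely the claimed $P(i)$.

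Combining the two observations: within each slot interval $\setSI_i$ the optimal allocation is water-filling with a uniform water level $\nu^{(i)}$, and the total transmit energy equals $P(i)$. Since conventional water-filling over a fixed sum-power budget has a unique solution (the water level is monotonically determined by the budget, cf.\ Algorithm~\ref{alg:wf}), $\nu^{(i)}$ must be exactly the water level produced by Algorithm~\ref{alg:wf} applied to SNRs $\{\snr_k\}_{k\in\setSI_i}$ and budget $P(i)$. This is the ``staircase water-filling'' description.

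The main obstacle is a clean bookkeeping step rather than a technical one: handling the boundary case $i=1$ (where there is no preceding TS, only the initial battery $B_1$) so that the formula $P(i)=\sum_{k\in\setSI_i}H_{k-1}$ unifies all intervals. The convention $H_0\triangleq B_1$ stated in the corollary is precisely what makes the two-line telescoping argument above work uniformly, and once that is set up the rest follows from $\mathsf{P}1$ and $\mathsf{P}2$ without further optimization.
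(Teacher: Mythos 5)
Your proof is correct and follows essentially the same route as the paper's: property $\mathsf{P}2$ pins down the per-interval energy budget $P(i)$ (your explicit telescoping of the tight constraints at consecutive TSs is just a spelled-out version of the paper's induction), and the form \eqref{eqn:optP:gwf} with a constant water level inside each interval identifies the allocation as conventional water-filling with that budget. The only cosmetic remark is that the constancy of $\nu_k$ within an interval comes from the definition of a TS rather than from $\mathsf{P}1$, but this does not affect the argument.
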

\begin{proof}
%Consider the $i$th slot interval.
From property $\mathsf{P}2$,  all the harvested energy available in the $i$th slot interval, namely $P(i)$, is used during the $i$th slot interval. This follows by induction for $i=1,\cdots, |\setT|$. Moreover, the optimal power allocation in \re{eqn:optP:gwf} is equivalent to conventional water-filling.
To maximize throughput, the optimal power allocation must then be to use conventional water-filling with sum power constraint of $P(i)$ for every $i$th slot interval.
\end{proof}

%Given the optimal set of TSs $\setT^{\star}$, from Corollary~\ref{rem:piecewiseWF}  we can then obtain the optimal power allocation by performing a water-filling for each slot interval.
From Corollary~\ref{rem:piecewiseWF}, without loss of optimality the staircase water-filling solution comprises of multiple conventional water-filling solutions, one for each slot interval.
The original optimization problem \re{eqn:Binf:prob} can thus be reduced to a search for the optimal TS set $\setT^{\star}$ that has a size from $1$ to at most $K$:
% it is relatively straightforward to obtain the optimal power allocation (and concurrently the optimal WL) via a conventional water-filling algorithm.
%The optimization problem reduces to finding the optimal TSs:
\be\nonumber
\mathcal{T}^{\star}  =
\max_{1\leq |\setT|\leq K} \max_{\setT} % \max_{\setT\in \{1\leq t_1 < \cdots <t_{|\setT|}\leq K\}}
\;
& & \ToptWF(\snr_{1}^{t_1}, P(1)) + \ToptWF(\snr_{t_1+1}^{t_2}, P(2)) \\
&&+   \cdots + \ToptWF(\snr_{{t_{|\setT|}}-1}^{t_{|\setT|}}, P(|\setT|))
\label{eqn:optprob2}
\IEEEeqnarraynumspace
\ee
subject to the power allocation $P(1), \cdots, P(|\setT|)$ satisfying the constraints in \re{eqn:Binf:cons}.
A brute force search based on \eqref{eqn:optprob2} is of a high computational complexity.
%, since the size $|\setT^{\star}|$ of the optimal set of TSs is not known in advance.
Nevertheless, it turns out that it is optimal to simply employ a  forward-search procedure, starting with the search of the optimal $t_1^{\star}$, then of the optimal $t_2^{\star}$, and so on until the last optimal TS $t_{|\setT^{\star}|}^{\star}$ equals  $K$, at which point the optimal size $|\setT^{\star}|$ is also obtained.

The first optimal TS $t_1^{\star}$ can be found in Lemma~\ref{lem:2} given below; by induction, the search of the subsequent optimal TSs will follow similarly.
Lemma~\ref{lem:2} requires the following {\em feasible-search procedure} for a given optimization problem \re{eqn:Binf:prob}:
\begin{enumerate}
\item \label{step:1}
Initialize $\setT_1$ as an empty set.
\item \label{step:2}
For $t_1=1, \cdots, K$, obtain the optimal power allocation from slot $1$ to slot $t_1$ by  using a water-filling algorithm (such as Algorithm~\ref{alg:wf}) assuming that all harvested energy is available, i.e., the sum power constraint is $P_{\max}=B_1+\sum_{i=1}^{t_1-1} H_i$.
\item \label{step:3}
Admit $t_1$ in the set $\setT_1$ if the corresponding optimal power allocation satisfies the constraint \re{eqn:Binf:cons} for $k=1,\cdots,t_1$.
\end{enumerate}
%Any $t_1\in\setT_1$ is a feasible solution to the optimization problem \re{eqn:Binf:prob} due to the admission check in Step~\ref{step:3}.
%The feasible-search procedure produces a {\em feasible} set  $\setT_1$ for the first TS, since any $t_1\in \setT_1$ satisfies the constraints to the optimization problem  \re{eqn:Binf:prob} due to Step~\ref{step:2}.
The set $\setT_1$ is non-empty; it contains at least the element $t_1=1$, as the constraint \re{eqn:Binf:cons} in Step~\ref{step:3} is equivalent to the sum power constraint in Step~\ref{step:2}.
Moreover, the set $\setT_1$ includes all possible candidates for the optimal $t_1^{\star}$; the only candidates that are not included are those where at least one of the constraint in \re{eqn:Binf:cons}  is not satisfied for slot $k=1,\cdots, t_1^{\star}$.

\begin{lemma}\label{lem:2}
Let $\setT_1$ be the feasible set of $t_1$ obtained by the feasible-search procedure.
Then the optimal TS is given by the largest element in $\setT_1$, i.e.,
\be\label{eqn:optTP}
t_1^{\star} &=&\max_{t_1\in \setT_1} t_1.
\ee
\end{lemma}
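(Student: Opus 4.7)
My plan is to split the equality $t_1^\star=\max_{t_1\in\setT_1}t_1$ into the two inequalities; the first follows directly from structural results already established, and the second from a clean single-step exchange argument. For the easy direction, $t_1^\star\le t_1^{\max}:=\max_{t_1\in\setT_1}t_1$: by Corollary~\ref{rem:piecewiseWF}, the optimal staircase water-filling applies conventional water-filling on its first segment $[1,t_1^\star]$, and by property~$\mathsf{P}2$ of Theorem~\ref{lem:L} this segment exhausts the full sum-power budget $B_1+\sum_{i=1}^{t_1^\star-1}H_i$. Feasibility of the optimum then guarantees that this allocation satisfies \re{eqn:Binf:cons} for every $k\le t_1^\star$, which is precisely the admission criterion of the feasible-search procedure, so $t_1^\star\in\setT_1$.

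For the reverse inequality, suppose for contradiction that $t_1^\star<t_1^{\max}$, and let $E:=\sum_{i=1}^{t_1^{\max}}T_i^\star$ denote the total energy the presumed optimum spends over $[1,t_1^{\max}]$. I would define a competing allocation $\mathbf{T}^{\text{mod}}$ by replacing the first-$t_1^{\max}$ block with a single-water-level water-filling having sum-power budget $E$, while leaving $T_k^\star$ unchanged for $k>t_1^{\max}$. The allocation $\mathbf{T}^{\text{mod}}$ is feasible: on $[1,t_1^{\max}]$ its water level is no greater than the water level $\tilde\nu$ of the water-filling with budget $B_1+\sum_{i=1}^{t_1^{\max}-1}H_i$ (since $E\le B_1+\sum_{i=1}^{t_1^{\max}-1}H_i$ and water levels are monotone in budget), and the latter water-filling is feasible because $t_1^{\max}\in\setT_1$, so the new block is a fortiori feasible; on $[t_1^{\max}+1,K]$, the partial sums $\sum_{i=1}^k T_i^{\text{mod}}=E+\sum_{i=t_1^{\max}+1}^k T_i^\star=\sum_{i=1}^k T_i^\star\le B_1+\sum_{i=1}^{k-1}H_i$ by the original's feasibility.

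Now $\mathbf{T}^{\text{mod}}$ strictly beats the presumed optimum. On $[1,t_1^{\max}]$ the optimum's allocation has at least two distinct water levels, because $t_1^\star$ is a transition slot strictly inside $[1,t_1^{\max}]$ and by definition the water level changes across a transition slot; by strict concavity of $\mi(\snr_k,\cdot)$ the unique maximiser of sum throughput on $[1,t_1^{\max}]$ with total budget $E$ is the single-water-level water-filling, so the optimum's multi-level allocation on $[1,t_1^{\max}]$ is strictly worse than $\mathbf{T}^{\text{mod}}$'s, while on $[t_1^{\max}+1,K]$ the two coincide. This contradicts the optimality of the presumed optimum, giving $t_1^\star=t_1^{\max}$. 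The only delicate step is the global feasibility check for $\mathbf{T}^{\text{mod}}$ spelled out above; once that is in hand, the contradiction is immediate from strict concavity of the per-slot throughput.
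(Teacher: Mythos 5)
Your proof is correct in substance but takes a genuinely different route from the paper. The paper's own argument stays entirely at the level of water levels: for any two feasible elements $t'<t''$ of $\setT_1$ it first shows the corresponding conventional water-filling levels satisfy $\nu'\geq\nu''$ (otherwise the $\nu''$-allocation would overdraw at slot $t'$, where the $\nu'$-allocation exhausts the budget by property $\mathsf{P}2$), and then argues that taking $t_1^{\star}=t'$ would force the optimal water level to decrease somewhere in $(t',t'']$ in order to respect the cumulative constraint \re{eqn:Binf:cons} at $t''$, contradicting the monotonicity property $\mathsf{P}1$ of Theorem~\ref{lem:L}; by induction only the largest element of $\setT_1$ survives. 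You instead split the claim into (i) $t_1^{\star}\in\setT_1$, which you obtain from Corollary~\ref{rem:piecewiseWF} and $\mathsf{P}2$ exactly as the paper does in the remark preceding the lemma, and (ii) an explicit exchange argument: replace the optimum on $[1,t_1^{\max}]$ by the single-level water-filling with the same total energy $E$, check feasibility via monotonicity of the water level in the budget together with the admission criterion for $t_1^{\max}$, and invoke strict concavity of \re{eqn:gaussianmi} to conclude a strict improvement. Your route yields a self-contained optimality contradiction that does not need $\mathsf{P}1$, at the price of using strict concavity (harmless here, since Section~\ref{sec:full-infBmax} assumes \re{eqn:gaussianmi}); the paper's route works directly with the KKT multipliers and is the form that motivates the backward search in Algorithm~\ref{alg:tp}. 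Your feasibility verification for the modified allocation is complete and correct.

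One step deserves care: having two distinct water levels on $[1,t_1^{\max}]$ does not by itself imply that the optimal allocation differs from the single-level water-filling with budget $E$, because of the $[\cdot]^+$ clipping in \re{eqn:optP:gwf}; a strictly higher water level on slots that receive zero power produces the same $T_k$. The coincidence occurs precisely when the optimum spends nothing in $(t_1^{\star},t_1^{\max}]$, in which case your modified allocation equals the optimum, no strict improvement is obtained, and the contradiction does not bite (in that degenerate case the dual variables are non-unique and the first transition slot can equivalently be relabelled at $t_1^{\max}$, which is all the algorithm needs). This is a degeneracy of the same nature that the paper's own proof also glosses over (it too relies on strict inequalities between water levels), so your proof matches the paper's level of rigor; to make it airtight you should either exclude this case explicitly or note that when the two allocations coincide the optimum admits a transition-slot description with $t_1^{\star}=t_1^{\max}$.
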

\begin{proof}
%The optimal power allocation is obtained by staircase water-filling from Corollary~\ref{rem:piecewiseWF} .
If $|\setT_1|=1$, then that only $t_1$ must be optimal.
Henceforth, assume $|\setT_1|\geq 2$.
Consider two TSs $t', t''\in \setT_1$, where $t'<t''$. Denote their respective optimal WLs obtained from the water-filling algorithm as  $\nu', \nu''$. Then $\nu' \geq \nu''$. Otherwise if $\nu' < \nu''$, more power is allocated for each time slot $k=1\cdots,t',$ if the WL $\nu''$ is used, compared to the case where the WL $\nu'$ is used. But since the power allocation with WL $\nu'$ has used all available power at slot $t'$ due to property $\mathsf{P}2$ in Theorem~\ref{lem:L}, the power allocation with WL $\nu''$ is infeasible and thus cannot be optimal.

We now show that $t'$ cannot be the optimal $t_1^{\star}$ by contradiction. Suppose that $t_1^{\star}=t'$, i.e., water-filling is used from slot $1$ to slot $t'$ with WL $\nu'$. The WL for the power allocation must then subsequently decrease at some slot $t'< k\leq t''$, otherwise the sum power allocated from slots $1$ to slot $t''$ will be more than the sum power allocated with the (constant) WL $\nu''$,  which violates the sum power constraint. But from property $\mathsf{P}1$ in Theorem~\ref{lem:L}, the optimal WL is non-decreasing. Thus, $t_1^{\star}\neq t'$ by contradiction. By induction, all elements in $\setT_1$, except for the largest one, are suboptimal. The only candidate left, namely the largest element, must then be optimal.
\end{proof}

We now propose Algorithm~\ref{alg:tp} below to solve \re{eqn:optprob2}, which is optimal according to Theorem~\ref{thm:optalgo} below.
Briefly, Algorithm~\ref{alg:tp} performs a forward-search procedure (starting from slot $1$) in each of the outer iteration for $t_1^{\star}, t_2^{\star},\cdots,$ until $t_{|\setT^{\star}|}^{\star}=K$. Given that $t_{1}^{\star}, \cdots, t_{i-1}^{\star}$ is found, to obtain $t_i^{\star}$, an inner iteration is performed via a {\em backward-search} procedure, starting from slot $t_{i-1}^{\star}, t_{i-1}^{\star}-1,\cdots,$ until slot $1$.

\renewcommand{\baselinestretch}{1}
%\IncMargin{1em}
\begin{algorithm}
\SetKwInOut{Input}{input}\SetKwInOut{Output}{output}
\Input{slot size $K$; SNRs $\{\snr_k\}$; harvested power $\{H_k\}$ where we let $H_0=B_1$;  tolerance $\epsilon$ %(close to zero)
}
\Output{optimal set of TSs $\setT^{\star}=\{t_i^{\star}\}$} %, optimal WL $\lambda$}
\BlankLine
\tcp{initialization}
$t_0=0$\;

\BlankLine
\For{$i=1, 2, \cdots, K$}
{   \tcp{Outer iteration: find $t_1^{\star}$, then $t_2^{\star}$, and so on}
    \For{$k=K, K-1,\cdots,t_{i-1}+1$}
    {
    \tcp{Inner iteration: find the largest feasible $t_i$ in \re{eqn:optTP}}
    Use Algorithm~\ref{alg:wf} for slot $t_{i-1}+1$ to slot $k$ with inputs \\
    (i) SNRs $\{\snr_{t_{i-1}+1}, \cdots, \snr_k\}$ \\
    (ii) $P_{\max}=\sum_{i=t_{i-1}}^{k-1} H_{i}$ \\
    (iii) tolerance $\epsilon$, \\
    to give output  $\{T^{\star}_{t_{i-1}+1}, \cdots, T^{\star}_k\}$;
    \\
    \If{$\{T^{\star}_1, \cdots, T^{\star}_k\}$ satisfy the constraints \re{eqn:Binf:cons}}
    { $t_i^{\star}:=k $;}
    %\Else{$t_i(k):=\emptyset $;}
    %WF $\nu_k$ by ignoring whether the harvested constraint is satisfied or not\;
    %reject $\nu_k$ if constraint is not satisfied\;
    }
%    {$t_i(k):=k $;}
%    $t_i^{\star}:=\arg\max_{k} t_i(k)$\;
    %$\nu_i:= \min_{k} \nu_k$\;
    \If{$t_i^{\star}=K$ }{\bf{exit}\;}
}

%\While{$|P_{\max}-P|>\epsilon$ or $P>P_{\max}$}
%{
%\tcp{update $P$ to be closer to $P_{\max}$}
%$\lambda :=(\lambda^{\mathrm{lo}}+\lambda^{\mathrm{hi}})/2$ \;
%$T_k:=[\lambda-1/\snr_k]^+, k\in\mathcal{K}$ \;
%$P := \sum_{k=1}^K T_k$\;
%%$\mathcal{T}:= \sum_{k=1}^K \mi(\snr_k, T_k)$ \;
%\tcp{update $(\lambda_{\mathrm{lo}}, P_{\mathrm{lo}})$ or $(\lambda_{\mathrm{hi}}, P_{\mathrm{hi}})$}
%\eIf{$P>P_{\mathrm{max}}$}
%{
%%\tcp{update $P_{\mathrm{hi}}$ as $P$}
%$\lambda^{\mathrm{hi}}:=\lambda$\;
%$P^{\mathrm{hi}} := P$\;
%}{
%%\tcp{update $P_{\mathrm{lo}}$ as $P$}
%$\lambda^{\mathrm{lo}}=\lambda$\;
%$P^{\mathrm{lo}} := P$\;
%}
%}
\caption{Finding optimal TSs}\label{alg:tp}
\end{algorithm}
%\DecMargin{1em}
\renewcommand{\baselinestretch}{\mystretch}

%We give an algorithm to obtain the optimal TS in Algorithm~\ref{alg:tp}. It makes use of the conventional water-filling algorithm.

\begin{theorem}\label{thm:optalgo}
Algorithm~\ref{alg:tp} obtains the optimal $\setT^{\star}$ that solves the optimization problem in \re{eqn:optprob2} or equivalently \re{eqn:Binf:prob}.
\end{theorem}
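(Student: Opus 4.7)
The plan is to prove Theorem~\ref{thm:optalgo} by induction on the outer iteration index $i$, using Lemma~\ref{lem:2} as the base case and combining Corollary~\ref{rem:piecewiseWF} with Property $\mathsf{P}2$ of Theorem~\ref{lem:L} for the inductive step.

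For the base case $i=1$, I would observe that the inner loop of Algorithm~\ref{alg:tp} precisely implements the feasible-search procedure preceding Lemma~\ref{lem:2}: for each candidate $k\in\{1,\ldots,K\}$ it invokes Algorithm~\ref{alg:wf} on slots $1,\ldots,k$ with sum-power $P_{\max}=B_1+\sum_{j=1}^{k-1}H_j$, then admits $k$ as feasible iff the harvesting constraints \re{eqn:Binf:cons} hold for $k'=1,\ldots,k$. Because the backward sweep from $K$ down to $1$ is used to retain the largest admissible index, its output agrees with the maximizer in \re{eqn:optTP}, which by Lemma~\ref{lem:2} equals the optimal first TS $t_1^\star$.

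For the inductive step, assume $t_1^\star,\ldots,t_{i-1}^\star$ have been correctly identified. By Property $\mathsf{P}2$, the battery is emptied just after slot $t_{i-1}^\star$, and by Corollary~\ref{rem:piecewiseWF} the staircase water-filling decomposes into independent conventional water-fillings on each slot interval. Hence the residual problem over slots $t_{i-1}^\star+1,\ldots,K$ is itself an instance of \re{eqn:Binf:prob}, with effective initial battery $H_{t_{i-1}^\star}$, harvested sequence $\{H_{t_{i-1}^\star+1},\ldots,H_{K-1}\}$, and SNR sequence $\{\snr_{t_{i-1}^\star+1},\ldots,\snr_K\}$. Applying Lemma~\ref{lem:2} to this shifted instance, its first optimal TS (relative to the original numbering) equals the largest feasible $k$ returned by the $i$-th inner loop of Algorithm~\ref{alg:tp}, which is exactly the computed $t_i^\star$. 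Termination at $t_i^\star=K$ then correctly identifies the size $|\setT^\star|$.

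The main obstacle will be rigorously justifying the decoupling: one must rule out a globally optimal solution that chooses a \emph{smaller} $t_1$ than the maximum feasible element of $\setT_1$ in the hope of engineering a more favourable continuation. This is already precluded by the proof of Lemma~\ref{lem:2}, which uses Property $\mathsf{P}1$ (non-decreasing water levels) to show that every strictly smaller $t_1\in\setT_1$ would force a later water-level \emph{decrease} and hence be infeasible or suboptimal. Re-applying this argument at each level of the induction, together with the zero-battery boundary condition provided by $\mathsf{P}2$ at each confirmed TS, propagates the optimality through the entire forward search and yields $\setT^\star$ as the greedy output of Algorithm~\ref{alg:tp}.
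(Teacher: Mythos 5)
Your proposal is correct and follows essentially the same route as the paper: Lemma~\ref{lem:2} identifies $t_1^{\star}$ as the largest feasible index found by the backward inner sweep, and Property~$\mathsf{P}2$ (the battery being empty at each TS) decouples the residual problem over the remaining slots so the same argument applies inductively until the TS reaches $K$. Your explicit invocation of Corollary~\ref{rem:piecewiseWF} and the discussion of why a smaller $t_1$ cannot help merely make precise steps the paper handles implicitly via Lemma~\ref{lem:2}.
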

\begin{proof}
The $i$th outer iteration of Algorithm~\ref{alg:tp} finds the optimal $t_i^{\star}$. Consider $i=1$.
From Lemma~\ref{lem:2}, we can determine the optimal $t_1^{\star}$ by finding the largest feasible $t_1$. Without loss of optimality, we can modify the feasible-search procedure such that the search (in Step~\ref{step:2}) starts from the largest slot index to the smallest, and (Step~\ref{step:3}) terminates to give the optimal $t_1^{\star}$ once a feasible $t_1$ is found. These modifications lead to the inner iteration in  Algorithm~\ref{alg:tp}.

From Property~$\mathsf{P}2$ in Theorem~\ref{lem:L}, in the first TS interval all the power available would be used. Since no power is available for subsequent slot intervals given $t_1^{\star}$, the power allocation for subsequent slot intervals can be optimized independent of the actual power allocated in the first slot interval.
The throughput maximization problem from slot $t_1^{\star}+1$ onwards can be solved similarly as before (after removing time slots $1, \cdots, t_1^{\star}$).
Thus, we apply the inner iteration again to determine $t_2^{\star}$, similarly for $t_3^{\star}$ and so on, as reflected in the outer iteration of Algorithm~\ref{alg:tp}. The iteration ends when the optimal TS equals $K$, which is the largest possible value as stated in optimization problem in \re{eqn:optprob2}.
\end{proof}

\subsubsection{Update Algorithm when New Slots Become Available}

Suppose that we have obtained the optimal $\setT^{\star}$ based on Algorithm~\ref{alg:tp} for a $K$-slot system. Now, a new slot becomes available for our use where its SI is known. We wish to obtain the new optimal solution for this $(K+1)$-slot system, say $\setT^{\star}_{\text{new}}=\{t^{\star}_{1,{\text{new}}}, t^{\star}_{2,{\text{new}}}, \cdots\}$. Instead of implementing Algorithm~\ref{alg:tp} afresh, we can obtain an update of  $\setT^{\star}_{\text{new}}$ from $\setT^{\star}$ as follows:
\begin{itemize}
\item  Consider $i=1$ in the outer iteration of Algorithm~\ref{alg:tp}. We only need to execute $k=K+1$ in the inner iteration. If the constraints \re{eqn:Binf:cons} are satisfied, then we have obtained $t^{\star}_{1,{\text{new}}}=K+1$, hence  $\setT^{\star}_{\text{new}}=\{t^{\star}_{1,{\text{new}}}\}$ and Algorithm~\ref{alg:tp} terminates. Otherwise, since we already know that  $t^{\star}_1$ in the largest element in $ \setT^{\star}$, we obtain immediately $t^{\star}_{1,{\text{new}}}=t^{\star}_{1}$.
\item The subsequent iterations are executed similarly. For the $i$th outer iteration, where $i=2,3,\cdots$, we only execute $k=K+1$ in the inner iteration. If the constraints \re{eqn:Binf:cons} are satisfied, then $t^{\star}_{i,{\text{new}}}=K+1$ and Algorithm~\ref{alg:tp} terminates; otherwise  $t^{\star}_{i,{\text{new}}}=t^{\star}_{i}$.
\end{itemize}
Hence, for every outer iteration, only one additional inner iteration is executed until the constraints \re{eqn:Binf:cons}  are satisfied. Since there are at most $K$ outer iterations, we need to execute at most  $K$ inner iterations in total.
In cases when the number of available slots can increase dynamically in a multi-user system, say when other users give up their slots and is assigned to our energy harvesting system, the above proposed update algorithm allows an efficient way to update  $\setT^{\star}$.

\section{Numerical Results}\label{sec:num}

% the graphs below do not use instantaneous CSI
%\begin{figure}[f]
%\centering
%\includegraphics[scale=\FigSize]{causal_full_si_awgn_v2a.eps}
%\caption{AWGN channel: optimal throughput when causal SI (blue with  ``$\times$'' markers) or full SI (red with ``$\circ$'' markers)  is available for $K=1,2,4$.}
%\label{fig:causal_full_si_v2a}
%\end{figure}
%
%\begin{figure}[f]
%\centering
%\includegraphics[scale=\FigSize]{causal_full_si_v4.eps}
%\caption{Rayleigh fading channel: optimal throughput when causal SI (blue with  ``$\times$'' markers) or full SI (red with ``$\circ$'' markers)  is available for $K=1,2,4$.}
%\label{fig:causal_full_si_v4}
%\end{figure}
%
%\begin{figure}%[f]
%\centering
%\includegraphics[scale=\FigSize]{full_si_awgn_v1.eps}
%\caption{AWGN channel: optimal throughput when full SI is available for large number of slots $K$.}
%\label{fig:causal_full_si_v1}
%\end{figure}
%
%
%\begin{figure}%[f]
%\centering
%\includegraphics[scale=\FigSize]{full_si_v1.eps}
%\caption{Rayleigh fading channel: optimal throughput when full SI is available for large number of slots $K$.}
%\label{fig:full_si_v1}
%\end{figure}

To obtain numerical results, we assume that the SNR $\snr_k$ and the harvested energy $H_k$ are i.i.d. over time slot $k$ and the channel is either an AWGN or Rayleigh fading channel, as described in Section~\ref{subsec:iid}.
We assume the initial stored energy $B_1$ and the harvested energy $H_k$ takes a value in $\{0,0.5,1\}$ with equal probability and $\Bmax\rightarrow\infty$.
To measure the performance of various schemes, we plotted the throughput per slot, i.e., the sum throughput divided by the number of slots $K$, as the average SNR $\snrave$ is increased.

%The results are obtained by numerical calculations or Monte Carlo simulations, as follows.

If causal SI is available, the optimal policy is obtained recursively by applying Lemma~\ref{lem:1}. Specifically, we first obtain  $\bar{J}_{K}(B)$ in \re{eqn:J_expected} via the closed-form results in Section~\ref{subsec:iid}; we drop the $\snr$ and $H$ arguments due to the i.i.d. assumption. Then we obtain  ${J}_{K-1}(\snr, B)$ in \re{eqn:J_k}, say by an iterative  bisection method. The throughput are averaged over $10^4$ independent realizations of $\snr$ and $H$ to give $\bar{J}_{K-1}(B)$.
This procedure is performed for different $B$, discretized in step size of $0.01$, and stored to be used for the next recursion. The iteration is repeated for $k=K-2, \cdots, 1$.
If instead full SI is available, the optimal policy is obtained by Algorithm~\ref{alg:tp}; we have verified that our proposed algorithm is significantly faster but is equivalent to solving the problem via a  standard optimization software. The throughput per slot is obtained from averaging the results from $10^4$ independent Monte Carlo runs.

The results are shown in Fig~\ref{fig:causalfull_si_awgn_instantchan_v1} for AWGN channels and in Fig.~\ref{fig:causalfull_si_fade_instantchan_v1} for Rayleigh fading channels.
The throughput in both cases, when either full SI or causal SI is available, is the same for $K=1$, because any SI cannot be exploited for future slots.
However, in both cases the throughput per slot increases as $K$ increases. The increment is more substantial when full SI is available, intuitively because the SI can then be much better exploited.
The incremental improvement as $K$ increases is significant when $K$ is small, but becomes less significant when $K$ is large.
The throughput with either full SI or causal SI does not differ significantly, possibly because the SI that can be further exploited from full SI is limited in our i.i.d. scenario.

\begin{figure}%[f]
\centering
\includegraphics[scale=\FigSize]{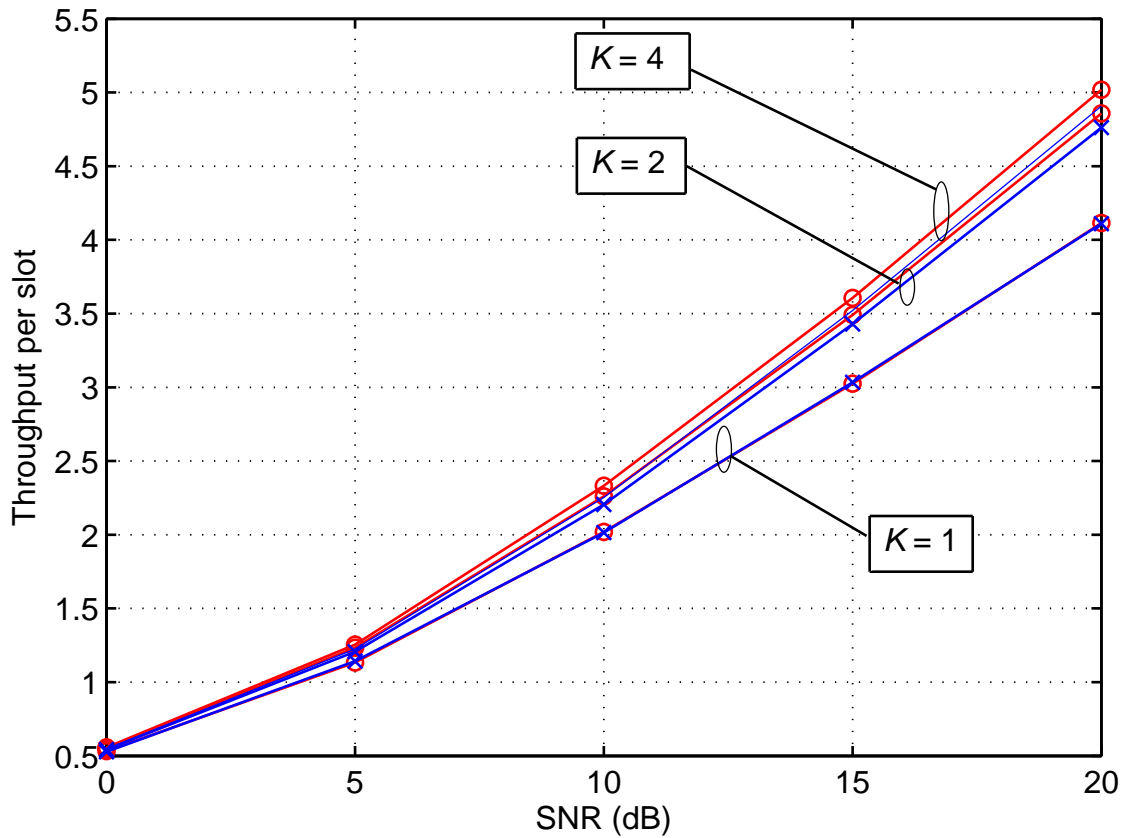}
\caption{AWGN channel: optimal throughput when causal SI (blue with  ``$\times$'' markers) or full SI (red with ``$\circ$'' markers)  is available for $K=1,2,4$.}
\label{fig:causalfull_si_awgn_instantchan_v1}
\end{figure}

\begin{figure}%[f]
\centering
\includegraphics[scale=\FigSize]{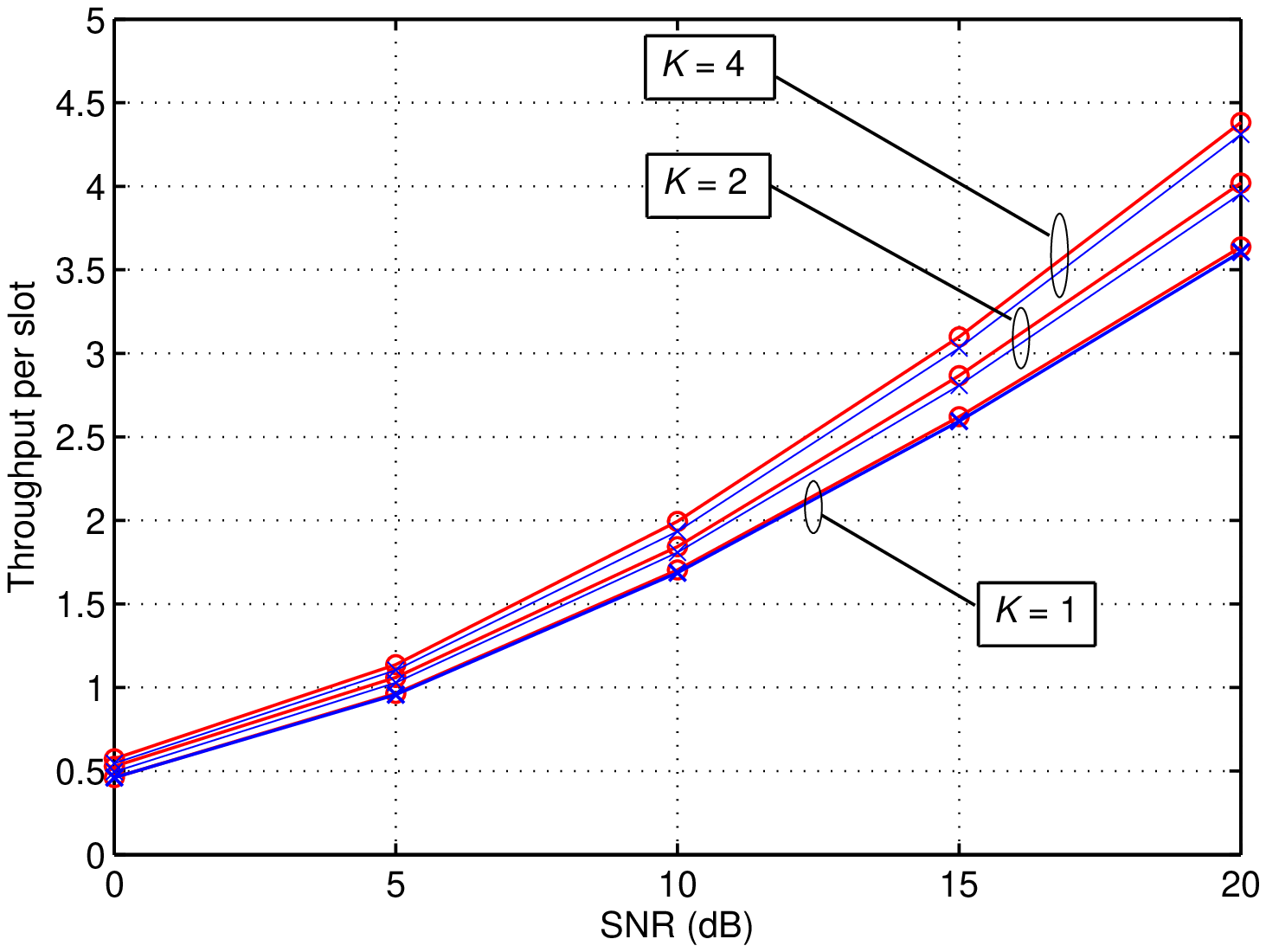}
\caption{Fading channel: optimal throughput when causal SI (blue with  ``$\times$'' markers) or full SI (red with ``$\circ$'' markers)  is available for $K=1,2,4$.}
\label{fig:causalfull_si_fade_instantchan_v1}
\end{figure}

\subsection{Heuristic Schemes with Causal SI}\label{sec:heu}

Next, we consider two heuristic schemes that use causal SI yet can be easily implemented in practice, namely the naive scheme and the power-halving scheme.

In the {\em naive scheme}, all stored energy $B_k$ is used in every slot $k$, i.e., $T_k=B_k$. This is equivalent to the case of $K=1$ in our optimization problem regardless of whether causal SI is available (see Lemma~\ref{lem:1}) or full SI is available (see Theorem~\ref{thm:optalgo}). In both cases, it is optimal to use all stored energy. As seen earlier, the case of $K=1$ performs significantly worse than the optimal schemes for $K>2$ in both cases. To obtain further improvement in the per-slot throughput, we need to further exploit the causal SI available.

\begin{figure}%[f]
\centering
\includegraphics[scale=\FigSize]{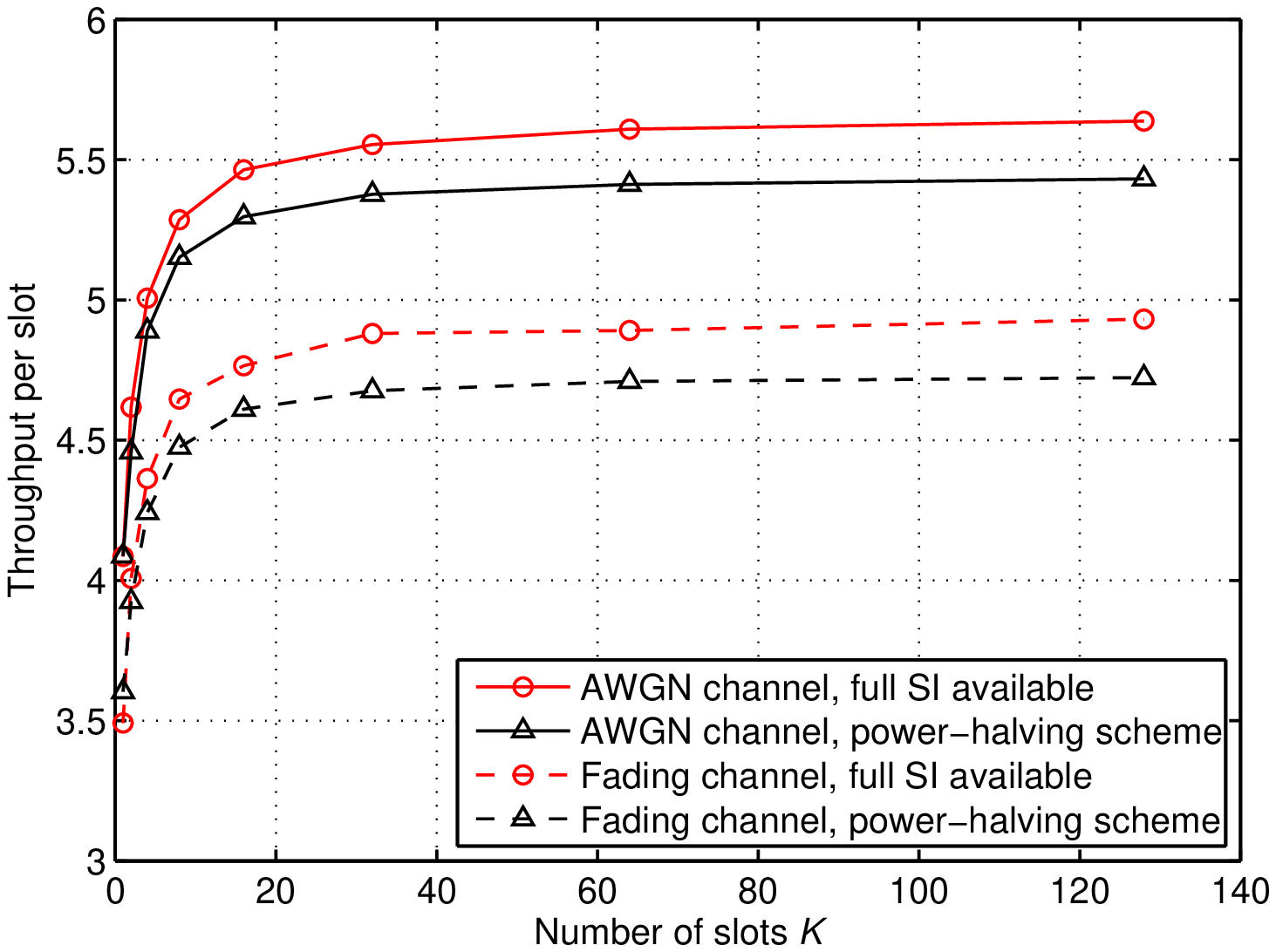}
\caption{Throughput based on the power-halving scheme. The optimal throughput when full SI is also plotted for comparison.}
\label{fig:half_batt_instanchan_diffK_v1}
\end{figure}

In the {\em power-halving scheme}, all stored power is used for transmission in the last slot, while for all other slots  half of the stored energy is used,  i.e., $T_k= w_k B_k$ where $w_k=1$ if $k=K$ and $w_k=1/2$ otherwise.
This scheme is simple to implement. Intuitively, the present throughput is traded equally with the future throughput by splitting the battery energy $B_k$ into two halves. We note that this scheme also implicitly exploits causal information of the harvested energy (which accumulates as the stored energy $B_k$).
Moreover, the power-halving scheme satisfies the following characteristics and thus likely improves the throughput when causal SI is only available:
\begin{enumerate}
\item $T_k$  increases with $B_k$, in accordance with Theorem~\ref{thm:monotone} in the causal SI case.
\item From Remark~\ref{rem:powerhalf}, if $K=2$, the optimal power allocation is given by half of the battery energy $B_1/2$ and a correction term that depends on the SNRs and harvested energy. Ignoring this correction term (which is not known due to the lack of full SI) then leads to the power-halving scheme for $K=2$. %Numerical results (see Fig.~\ref{fig:half_batt_instanchan_diffK_v1} in Section~\ref{sec:num})  also show that the power-halving scheme performs close to the optimal policy for small $K$.
    %This scheme corresponds to \re{eqn:tildeT}  in Corollary~\ref{cor:K=2} by setting the second term to be zero, in the case of full SI where $K=2$. That is, only the harvested energy that has been accumulated in the battery is used as SI.
\item More stored energy is deferred to be used in the latter slots, thus resembling the optimal policy with staircase WLs in the full SI case.
\end{enumerate}

Fig.~\ref{fig:half_batt_instanchan_diffK_v1} shows the throughput per slot obtained  by averaging the numerical results from $2\times 10^4$ independent runs of Monte Carlo simulations, for both AWGN channels and Rayleigh fading channels.
We fix the SNR at $20$~dB.
As benchmarks, we also plot the optimal throughput when full SI is available. This is because the computational complexity in solving the Bellman's equations in Lemma~\ref{lem:1} when causal SI is available becomes prohibitive for large $K$; moreover we observed earlier that the performance with causal SI is close to the performance with full SI.
The results show that the power-halving scheme is able to improve on the per-slot throughput as $K$ is increased, and is only within about $0.2$~bits away from the case when full SI is available.
Moreover, we see that at small $K$, the gap is even closer, as suggested by the second characteristic mentioned above.
Similar results are obtained at lower SNR, with an even smaller throughput degradation compared to the case when full SI is available.
%Similar to the optimal schemes, the increment in improvement is significant for small $K$ while the improvement is negligible for large $K$.
%The throughput per slot obtained is also comparable to the optimal scheme with causal SI. For example for the case of Rayleigh fading channel with $K=4$, by comparing Fig.~\ref{fig:halfbatt_fade_v1} with Fig.~\ref{fig:causal_full_si_v4} we see that the throughput degradation is less than $0.1$ bits for all SNRs.
Further performance gain may also be obtained by optimizing this tradeoff by considering the channel conditions explicitly.

%\begin{figure}[f]
%\centering
%\includegraphics[scale=\FigSize]{halfbatt_awgn_v1.eps}
%\caption{AWGN channel: throughput based on the power-halving scheme, where half of the stored energy is used for transmission energy (except for the last slot where all stored energy is used).}
%\label{fig:halfbatt_awgn_v1}
%\end{figure}
%
%\begin{figure}[f]
%\centering
%\includegraphics[scale=\FigSize]{halfbatt_fade_v1.eps}
%\caption{Rayleigh fading channel: throughput based on the power-halving scheme, where half of the stored energy is used for transmission energy (except for the last slot where all stored energy is used).}
%\label{fig:halfbatt_fade_v1}
%\end{figure}

\section{Conclusion}\label{sec:con}
We considered a communication system where the energy available for transmission varies from slot to slot, depending on how much energy is harvested from the environment and expended for transmission in the previous slot.
We studied the problem of maximizing the throughput via power allocation over a finite horizon of $K$ slots, given either causal SI or full SI.
We obtained structural results for the optimal power allocation in both cases, which allows us to obtain efficient computation of the optimal throughput. Finally, we proposed a heuristic scheme where numerical results show that the throughput per slot increases as $K$ increases and performs relatively well compared to a naive scheme.
%, since the optimal solution is usually complex to obtain.
%Given full SI, we obtain a closed-form solution for the case of $K=2$. We also obtain the structure of the solution for arbitrary $K$ but with unlimited energy storage available. The optimal solution has a water-filling interpretation, as in \cite{Goldsmith97IT}. However, instead of a single water level, there are multiple water levels that decrease over time.

%
\appendices

%\appendix
\section{Proof of Theorem~\ref{thm:concave}}
\label{append:a}
%\begin{proof}[Proof of Theorem~\ref{thm:concave}]
% From \re{eqn:J_k}, clearly $J_k(\snr, B)$ is concave in $\snr$ for all $k$.
%Let $\state=(\snr, H, B).$
With $\snr, H$ fixed, we prove by induction that $J_k(\snr, H, B)$ and $\bar{J}_{k+1}(\snr, H, x)$ are concave in $B$ and $x$, respectively, for decreasing
%decreasing $k$, i.e., for
$k=K, \cdots, 1$.

Consider $k\in\{1,\cdots, K-1\}$.
Suppose that $J_{k+1}(\snr, H, B)$ is concave in $B$.
We note that $J_{k+1}({\snr}, H, \min\{\Bmax, x+H\})$ is concave in $x$, as it is the minimum of $J_{k+1}({\snr}, H, \Bmax)$ (a constant independent of $x$) and the concave function $J_{k+1}({\snr}, H, x+H)$.
It follows that $\bar{J}_{k+1}(\snr, H, x)$ % in \re{eqn:J_expected}
is concave in $x$, since expectation preserves concavity.
From \re{eqn:J_k}, $J_k$ is a supremal convolution of two concave functions in $B$, namely ${\mi}$ and $\bar{J}_{k+1}$ (with $\snr, H$ fixed). It follows that $J_k$ is concave in $B$, since the infimal convolution of convex functions is convex \cite[Theorem 5.4]{rockafellar70}.
To complete the proof by induction, we note that $J_{K}(\snr, H, B)=\mi(\snr,B)$ % in \re{eqn:J_K1}
is concave in $B$ by assumption on the mutual information function $\mi(\cdot,\cdot)$.
%By induction $J_k(\snr, B)$ is concave in $B$ for decreasing $k=K,\cdots, 1$. This concludes the proof.
%\end{proof}

\section{Proof of Theorem~\ref{thm:monotone}}
%\section{An Auxiliary Lemma}
We need Lemma~\ref{lem:monotonic} to prove Theorem~\ref{thm:monotone}.
%This result is a specific case of \cite{Topkis}.
%; the proof is in \cite[Theorem~2]{Amir}.
\begin{lemma}\label{lem:monotonic}
Consider %the problem
$T^{\star}(B) = \arg \max F(B,T),$ where the maximization is over interval $T_{l}(B)\leq T \leq T_{u}(B)$ that depends on $B$.
If $T_{l}(B), T_{u}(B)$ are non-decreasing in $B$, and if $F$ has non-decreasing differences in $(B,T)$, i.e., $\forall T'\geq T, B'\geq B$,
\be\label{eqn:incdiff}
F(B',T')-F(B,T')\geq F(B',T)-F(B,T),
\ee
then the maximal and minimal selections of $T^{\star}(B)$, denoted as $\overline{T}(B), \underline{T}(B)$, are non-decreasing.
\end{lemma}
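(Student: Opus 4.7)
The plan is to use a standard Topkis-style monotone comparative statics argument, adapted to the fact that the feasible interval itself shifts with $B$. I would take any $B' > B$ in the domain, let $T_1, T_2$ denote two candidate optimizers whose relative ordering is in question, and argue by contradiction: if the claimed monotonicity of a selection fails, then an exchange argument combined with the non-decreasing differences of $F$ violates either optimality or the max/min-selection property.

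First I would establish a short feasibility-swap observation: if $T_1$ is feasible at $B$, $T_2$ is feasible at $B'$, and $T_1 > T_2$, then each point is also feasible at the \emph{opposite} parameter. Monotonicity of $T_u$ gives $T_1 \le T_u(B) \le T_u(B')$, while $T_1 > T_2 \ge T_l(B')$ yields $T_1 \ge T_l(B')$, so $T_1 \in [T_l(B'), T_u(B')]$; symmetrically $T_2 < T_1 \le T_u(B)$ and $T_l(B) \le T_l(B') \le T_2$ place $T_2 \in [T_l(B), T_u(B)]$. This one observation is the only place where the monotonicity of $T_l, T_u$ is invoked.

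Next, the non-decreasing differences assumption \re{eqn:incdiff}, applied with $(T,T') = (T_2, T_1)$, rearranges to the key chain
\[
F(B', T_1) - F(B', T_2) \;\ge\; F(B, T_1) - F(B, T_2).
\]
For the maximal selection, I would suppose for contradiction that $T_1 := \overline{T}(B) > \overline{T}(B') =: T_2$. Optimality of $T_2$ at $B'$ gives $F(B', T_2) \ge F(B', T_1)$. If this inequality is strict, the chain forces $F(B, T_1) < F(B, T_2)$, contradicting $T_1 \in \arg\max F(B, \cdot)$; if it is an equality, then $T_1$ is also optimal at $B'$, contradicting the \emph{maximality} of $\overline{T}(B') = T_2 < T_1$. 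The argument for $\underline{T}$ is the mirror image: assume $T_1 := \underline{T}(B) > \underline{T}(B') =: T_2$, use $F(B, T_1) \ge F(B, T_2)$ from optimality at $B$ together with the same chain to deduce $F(B', T_1) \ge F(B', T_2)$; strictness contradicts optimality of $T_2$ at $B'$, while equality makes $T_2$ also optimal at $B$, contradicting the \emph{minimality} of $\underline{T}(B) = T_1 > T_2$.

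The only real obstacle is bookkeeping: correctly tracking which side of the inequality is strict in each of the max/min sub-cases so that the resulting contradiction targets either optimality or the max/min property of the selection, as appropriate. No analytic difficulty arises beyond the one-line feasibility swap, so I do not anticipate needing any additional hypotheses beyond what the lemma already provides.
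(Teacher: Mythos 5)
Your argument is correct and complete. Note, however, that the paper does not prove this lemma at all: its ``proof'' is a one-line citation to Theorem~2 of Amir's survey on supermodularity, so what you have written is a self-contained replacement for an external reference rather than a variant of an in-paper argument. Your proof is exactly the standard Topkis-style monotone-selection argument that underlies the cited result: the feasibility swap (which is where the ascending-interval structure, i.e.\ monotonicity of both $T_l$ and $T_u$, enters) shows that for $T_1>T_2$ with $T_1$ feasible at $B$ and $T_2$ feasible at $B'\ge B$, each point is feasible at the opposite parameter; then the non-decreasing differences hypothesis \re{eqn:incdiff} with $(T,T')=(T_2,T_1)$ gives the chain $F(B',T_1)-F(B',T_2)\ge F(B,T_1)-F(B,T_2)$, and your case split correctly directs the strict-inequality branch at optimality and the equality branch at the maximality (resp.\ minimality) of the selection, which is precisely the bookkeeping that makes the statement about $\overline{T}$ and $\underline{T}$ (rather than about an arbitrary selection, for which monotonicity can fail when the argmax is not a singleton) go through. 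What your write-up buys is that the paper's appeal to the literature becomes verifiable on the spot with only elementary manipulations, and it makes visible that no continuity, concavity, or compactness assumptions are needed beyond the existence of the maximal and minimal optimizers, which the lemma implicitly assumes.
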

\begin{proof}
See proof in \cite[Theorem~2]{Amir}. %, a special case of \cite{Topkis}.
%See \cite{Amir}. %[Theorem~2]{Amir}.
\end{proof}

% \section{Proof of Theorem~\ref{thm:monotone}}
%\begin{proof}[Proof of Theorem~\ref{thm:monotone}]

We now prove Theorem~\ref{thm:monotone} with $\snr, H$ fixed; we drop these arguments from all functions.
From \re{eqn:J_K1}, the optimal transmission power is $T^{\star}_K(B)=B$, which is increasing in $B$.
We now apply Lemma~\ref{lem:monotonic} to establish that Theorem~\ref{thm:monotone} hold for $k< K$.
Let $F(B,T) = {\mi}(T) + \bar{J}_{k+1}(B-T)$, according to \re{eqn:J_k}.
Let $T_{l}(B)=0$, $T_{u}(B)=B,$ which are non-decreasing in $B$.
To apply Lemma~\ref{lem:monotonic}, it is sufficient to show that each term in $F$ has non-decreasing differences in $(B,T)$.
Since ${\mi}(T)$ is independent of $B$, trivially ${\mi}(T)$ has non-decreasing differences in $(B,T)$.
To show that $g(B-T)\triangleq \bar{J}_{k+1}(B-T)$ has non-decreasing differences in $(B,T)$, we note that
$g(y+\delta)-g(y)\leq g(x+\delta)-g(x)$ for $x\leq y, \delta\geq 0,$ since $g(x)=\bar{J}_{k+1}(x)$ is concave in $x$ from Theorem~\ref{thm:concave}.
Substituting $x=B-T', y=B-T, \delta=B'-B$, we then obtain  \re{eqn:incdiff} with $F(B,T)=g(B-T)$.
From Theorem~\ref{thm:concave}, the objective function in \re{eqn:J} is concave, thus $T^{\star}(B)$ is unique.
From Lemma~\ref{lem:monotonic}, $T^{\star}(B)$  is thus non-decreasing in $B$, $k\in\mathcal{K}.$
%\end{proof}

\section{Proof of Corollary~\ref{cor:K=2}}
%\begin{proof}[Proof of Corollary~\ref{cor:K=2}]
%For convenience we omit the subscripts in $B_1, H_1, T_1$.
Since $K=2$ and full SI is available, from \re{eqn:gaussianmi}, \re{eqn:bellman}  we get
\be\label{eqn:cor:1}
J_1(\snr_1, B_1)=\max_{0\leq T\leq B_1} g(T),
\ee
where the objective function is given by
%\ben
$g(T) \triangleq \log_2(1+\snr_1 T )
%\\
%\log(1+\snr T)
 + \log_2(1+\snr_2 \min\{\Bmax,B_1-T+H_1\}). %, \; 0\leq T\leq B.
$
%\een

Suppose $H_1> \Bmax$. Then $\min\{\Bmax,B_1-T+H_1\}=\Bmax$ given $T\leq B_1$.
%$g(T)=\log(1+T\snr_1)+\log(1+\Bmax\snr_2)$ for $0\leq T\leq B$.
The optimal $T$ that solves \re{eqn:cor:1} is then %\re{eqn:cor:K=2a}.
\be\label{eqn:cor:K=2a}
T_1^{\star}= B_1 \mbox{ if } H> \Bmax.
\ee

Suppose $H_1\leq \Bmax$. Assume that $0\leq T \leq B_1+H_1-\Bmax.$ Then $\min\{\Bmax,B_1-T+H_1\}=\Bmax$, and so the optimal $T$
to maximize $g(T)$ subject to $0\leq T \leq B_1+H_1-\Bmax$ is given by the largest value in the variable space, namely $B_1+H_1-\Bmax.$
%\ee
Thus, in general the optimal solution in \re{eqn:cor:1} satisfies $T_1^{\star}\geq [B_1+H_1-\Bmax]^+$, where $[x]^+\triangleq \max(0,x)$.
Without loss of generality, we can thus express $T_1^{\star}$ as
\be\label{eqn:dom1}
T_1^{\star}=\arg\max_{[B_1+H_1-\Bmax]^+\leq T\leq B_1} g(T)
\ee
if $H_1\leq \Bmax$.
Now if $T  \geq [B_1+H_1-\Bmax]^+$, we have %$g(T)$ becomes
%\ben
$g(T) = \log_2(1+\snr_1 T )+ \log_2(1+\snr_2  (B_1-T+H_1))$,
%\een
%$\min\{\Bmax,B-T+H\}= B-T+H$ in $g(T)$ and so $g(T)$ becomes .
which is differentiable and concave. Observe that $\widetilde{T}$ in \re{eqn:tildeT} solves the equation $g'(\widetilde{T})=0$, i.e., $\widetilde{T}$ is the optimal solution for the {\em unconstrained} optimization problem $\max g(T)$.
By concavity of $g(T)$, we can then obtain \re{eqn:dom1} as
\ben\label{eqn:dom2}
&&T_1^{\star}
=
\arg\max_{[B_1+H_1-\Bmax]^+\leq T\leq B_1} g(T) \nonumber \\
&&=
\left\{
\begin{array}{ll}
B_1, & \widetilde{T}>B_1; \\
\widetilde{T}, & [B_1+H_1-\Bmax]^+\leq \widetilde{T}\leq B_1; \\
\hspace{0cm}
[B_1+H_1-\Bmax ]^{+}, & \widetilde{T}< [B_1+H_1-\Bmax]^+
\end{array}
\right .
%\;\;\;\;\;
\een
if $H_1\leq \Bmax$.
%That is, if $B+H-\Bmax\leq \widetilde{T}\leq B$, then $\widetilde{T}$ is also the optimal solution $T^{\star}$ for the {\em constrained} optimization problem \re{eqn:dom1}; otherwise, if $\widetilde{T}>B$ or $\widetilde{T}<B+H-\Bmax$, then the optimal solution is given by the boundary of the solution space.
By re-writing the above conditions in terms of $B_1$ and combining the result with \re{eqn:cor:K=2a}, we  then obtain $T_1^{\star}$ as stated in Corollary~\ref{cor:K=2}.
%\end{proof}

%----------------------------------------------------------------
%\bibliographystyle{IEEEtran}
%%\bibliographystyle{alpha}
%% argument is your BibTeX string definitions and bibliography database(s)
%\bibliography{IEEEabrv,./../relay,./../TUebibfile_all}

% Generated by IEEEtran.bst, version: 1.13 (2008/09/30)

\end{document}